\newtheorem{theorem}{Theorem}
\newtheorem{lemma}{Lemma}
\newtheorem{proof}{Proof}
\newtheorem{remark}{\textit{Remark}}
\newcommand{\bA}{\mathbf{A}}
\newcommand{\bB}{\mathbf{B}}
\newcommand{\bC}{\mathbf{C}}
\newcommand{\bD}{\mathbf{D}}
\newcommand{\bE}{\mathbf{E}}
\newcommand{\bI}{\mathbf{I}}
\newcommand{\bP}{\mathbf{P}}
\newcommand{\bR}{\mathbf{R}}
\newcommand{\bU}{\mathbf{U}}
\newcommand{\bV}{\mathbf{V}}
\newcommand{\bW}{\mathbf{W}}
\newcommand{\bX}{\mathbf{X}}
\newcommand{\bY}{\mathbf{Y}}
\newcommand{\bZ}{\mathbf{Z}}
\newcommand{\ba}{\mathbf{a}}
\newcommand{\bb}{\mathbf{b}}
\newcommand{\bu}{\mathbf{u}}
\newcommand{\bv}{\mathbf{v}}
\newcommand{\sB}{\mathcal{B}}
\newcommand{\sC}{\mathcal{C}}
\newcommand{\sF}{\mathcal{F}}
\newcommand{\sG}{\mathcal{G}}
\newcommand{\sK}{\mathcal{K}}
\newcommand{\sL}{\mathcal{L}}
\newcommand{\sP}{\mathcal{P}}
\newcommand{\sQ}{\mathcal{Q}}
\newcommand{\sS}{\mathcal{S}}
\newcommand{\sT}{\mathcal{T}}
\newcommand{\sY}{\mathcal{Y}}
\newcommand{\sZ}{\mathcal{Z}}
\newcommand{\bbB}{\mathbb{B}}
\newcommand{\bbP}{\mathbb{P}}
\newcommand{\bbR}{\mathbb{R}}
\newcommand{\sfA}{\mathsf{A}}
\newcommand{\sfB}{\mathsf{B}}
\newcommand{\sfC}{\mathsf{C}}
\newcommand{\sfE}{\mathsf{E}}
\newcommand{\sfX}{\mathsf{X}}
\newcommand{\sfY}{\mathsf{Y}}
\newcommand{\sfZ}{\mathsf{Z}}
\newcommand{\E}{\mathbb{E}}
\newcommand{\tp}{\text{T}}
\newcommand{\trace}{\text{trace}}
\newcommand{\rank}{\text{rank}}
\DeclareMathOperator{\vect}{vec}
\newcommand{\bbeta}{\boldsymbol{\beta}}
\newcommand{\balpha}{\boldsymbol{\alpha}}
\newcommand\numberthis{\addtocounter{equation}{1}\tag{\theequation}}
\begin{document}

\title{A tensor based varying-coefficient model for multi-modal neuroimaging data analysis}


\author{
    \IEEEauthorblockN{Pratim Guha Niyogi\textsuperscript{\textsection}\IEEEauthorrefmark{1}, Martin A. Lindquist\IEEEauthorrefmark{1}, Tapabrata Maiti\IEEEauthorrefmark{2}}
    \\
    \IEEEauthorblockA{\IEEEauthorrefmark{1}Department of Biostatistics, Johns Hopkins University, Baltimore, MD 21205}
    \IEEEauthorblockA{\IEEEauthorrefmark{2}Department of Statistics and Probability, Michigan State University, East Lansing, MI 48823
    }
}

\maketitle
\begingroup\renewcommand\thefootnote{\textsection}
\footnotetext{Corresponding author: pnyogi1@jhmi.edu}
\endgroup



\maketitle

\begin{abstract}
All neuroimaging modalities have their own strengths and limitations. A current trend is toward interdisciplinary approaches that use multiple imaging methods to overcome limitations of each method in isolation. At the same time neuroimaging data is increasingly being combined with other non-imaging modalities, such as behavioral and genetic data. The data structure of many of these modalities can be expressed as time-varying multidimensional arrays (tensors), collected at different time-points on multiple subjects. Here, we consider a new approach for the study of neural correlates in the presence of tensor-valued brain images and tensor-valued predictors, where both data types are collected over the same set of time points.  We propose a time-varying tensor regression model with an inherent structural composition of responses and covariates. Regression coefficients are expressed using the B-spline technique, and the basis function coefficients are estimated using CP-decomposition by minimizing a penalized loss function.  We develop a varying-coefficient model for the tensor-valued regression model, where both predictors and responses are modeled as tensors. This development is a non-trivial extension of function-on-function concurrent linear models for complex and large structural data where the inherent structures are preserved.  In addition to the methodological and theoretical development, the efficacy of the proposed method based on both simulated and real data analysis (e.g., the combination of eye-tracking data and functional magnetic resonance imaging (fMRI) data) is also discussed.
\end{abstract}

\par
\vspace{9pt}
\noindent \small{{\it Key words and phrases:} B-spline; CP decomposition; Functional MRI; Functional linear model; Multi-modal analysis}

\section{Introduction}
\IEEEPARstart{I}{n} recent years, there has been an explosive growth in the number of neuroimaging studies being performed. Popular imaging modalities include functional magnetic resonance imaging (fMRI), electroencephalography (EEG), diffusion tensor imaging (DTI),  positron emission tomography (PET), and single-photon emission-computed tomography (SPECT). Each of these techniques have their own limitations and strengths.  Therefore, a current trend is toward interdisciplinary approaches that use multiple imaging techniques to overcome limitations of each method in isolation. As an example, Figure \ref{fig:multimodal} illustrates the combination of fMRI and EEG data. At the same time, neuroimaging data is increasingly being combined with non-imaging modalities, such as behavioral and genetic data. Multi-modal analysis is an increasingly important topic of research, and to fully realize its promise, novel statistical techniques are needed. Here, we present a new approach towards performing such analysis.
\par
It is common for the data generated from neuroimaging studies to consist of time-varying signal measured over a large three-dimensional (3D) domain \cite{lindquist2008statistical, ombao2016handbook}.  Hence, the data are inherently spatio-temporal in nature.  Due to the massive size of the data along with its complex anatomical structure, classical vector-based spatio-temporal statistical methods are often deemed unrealistic and inadequate. It is becoming increasingly clear that any new model and methodology should address three fundamental concerns. First, standard spatio-temporal covariance modeling techniques are based on many parametric assumptions, which are often hard to validate in large high-dimensional data such as fMRI.  Second, modeling of spatio-temporal interactions often produces large covariance matrices containing millions of elements that are hard to estimate properly.  Third, storage of these large datasets while performing analysis is nearly impossible.
\par
The current research is motivated by the experiment \textit{studyforrest} (\url{http://studyforrest.org/}) which investigates high-level cognition in the human brain using complex natural stimulation, namely watching the Hollywood movie \textit{Forrest Gump} (1994).  The data consist of several hours of fMRI scans, structural brain images, eye-tracking data, and extensive annotations of the movie.   Details of this experiment are presented in Section \ref{sec:real-data}. In our motivating example, we focus on data consisting of voxel-wise fMRI images, measured over a large number of spatial locations (voxels) at 451 time-points. The goal of our analysis is to use the multivariate eye-tracking data, measured while the participants watch the movie, as covariates in a model that explains changes in the multivariate brain data. The vast size and scale of this data calls for well-equipped statistical techniques to find the association between brain regions and other covariates over time-varying activities.  It is useful to consider this as a regression problem with a multidimensional array of outcomes and predictors. These multidimensional arrays are popularly known as tensors.  Figure \ref{fig:whyTensor} illustrates the reason for considering a time-varying multidimensional array for the analysis.  Although the signals in both modalities (in this case fMRI and eye-tracking) are measured discretely over time, we consider them to be discrete measures of a smooth underlying function over time in a certain interval. This assumption is reasonable in the context of both brain activity and eye movement, as they can potentially change at any moment. 
\par
There are two main advantages to taking a tensor-based approach 
towards modeling this dataset. First, we can represent the unknown parameters to be estimated as a linear combination of rank-1 components, where the latter are expressed as the outer product of low-dimensional vectors.  This allows for the estimation of fewer parameters, which is consistent with variable selection or dimension reduction problems in statistics. Second, due to the need to estimate fewer parameters, the computational complexity is significantly reduced. 
\par
In a previous work, \cite{zhou2013tensor} formulated a regression framework that considers clinical outcomes as the response and images as covariates. Their method efficiently explored the spatial dependence of images in the form of a multi-dimensional array structure.  By extending the generalized linear regression to a multi-way parameter corresponding to the tensor-structured predictor, they proposed a penalized likelihood approach with adaptive lasso penalties, which are imposed on the individual margins of PARAFAC decomposition.  
A tensor-on-tensor regression approach was proposed in \cite{Lock2018}.
Furthermore, \cite{guhaniyogi2018bayesian} discussed a tensor response regression where the coefficients corresponding to each vector covariate are assumed to be tensors in the Bayesian framework.  Recently, \cite{liu2020low} have represented a generalized multi-linear tensor-on-tensor ridge regression model via tensor train representation.
\par
A varying-coefficient model in the functional data analysis (FDA) literature allows the regression coefficient to vary over some predictors of interest (say,  $T$).  In some cases, these predictors are confounded with covariates $\bX$ or some special variables, such as time.  This kind of model was first introduced and discussed by \cite{hastie1993varying} and has since been widely studied by researchers.  The non-constant relationship between functional response and predictors has been described in 
\cite{ramsay2005springer}.
\par
The current article provides the following contributions to this literature. First, we propose a method of modeling image data that can efficiently process large amounts of information and identify associations while preserving the structure of the 3D images and multi-layer covariates. Second, we consider the time-varying function-on-function concurrent linear model \cite{hastie1993varying} and generalize it to the tensor-on-tensor regression case, thus moving a step further than \cite{Lock2018}, which did not consider the time-varying coefficient.  Consequently, our generalization provides an extension to classical functional concurrent regression with tensor predictors and tensor covariates.  To the best of our knowledge, such an approach has not yet been proposed in statistics literature. Here, we express the regression coefficients using the B-spline technique, and the coefficients of the basis functions are estimated using CP-decomposition, thereby reducing computational complexity.  Furthermore, our model requires minimum assumptions compared to those in the existing literature. Our approach does not require the estimation of covariance separately. Thus, our proposal offers an important addition to the literature on functional and imaging data analysis. Our methods are flexible and general; therefore, they are applicable using data from different domains such as multi-phenotype analysis and imaging genetics.  This makes it an ideal approach for modeling multi-modal data of the type described in our motivating example.
\par
The rest of the article is organized as follows. Section \ref{sec:notation} reviews the notation and properties of the matrix and array.  The proposed tensor-on-tensor functional regression models are described in Section \ref{sec:general}.  Section \ref{sec:main-result} provides the theoretical properties of the proposed estimator.  Section \ref{sec:implementation} presents the algorithm and implementation of the method.  The simulation results are presented in Section \ref{sec:simulation} and real data examples are shown in Section \ref{sec:real-data}. 
Section \ref{sec:discussion} concludes with a discussion of future extensions. Technical proofs are presented in the appendix.

\section{Basic notations, definitions and properties}
\label{sec:notation}
In this section, multi-dimensional arrays, also known as tensors, play an important role. 
We begin with a brief summary of tensors for completeness purpose and define important notation which will be utilized in the rest of the paper. Interested readers can refer to a survey article by \cite{Kolda2009} for more information. 
\par
Throughout this paper, we denote tensors using Sans-serif upper-face letters
$(\sfA, \sfB, \cdots)$,
matrices using bold-face capital letters
$(\bA, \bB \cdots)$, 
vectors using bold-face lower-case letters $(\ba, \bb, \cdots)$, and
scalars as non-bold lower-case letters $(a, b, \cdots)$. 
The entry in the $i$-th row and $j$-th column of a matrix $\bA$ is denoted as $ (\bA) _{i, j} = a_{ij}$ 
and the $(i_{1}, \cdots, i_{D})$-th entry of a $D$ dimensional tensor is denoted as $(\sfA)_{i_{1}, \cdots, i_{D}} = a_{i_{1}, \cdots, i_{D}}$. 
For a $D$-way tensor $\sfA \in \bbR^{I_{1} \times \cdots \times I_{D}}$ with element $a_{i_{1}, \cdots, i_{D}}$ at position with mode $i_{d}, d = 1, \cdots, D$, 
vectorization operator $\vect(\cdot)$ is defined as a vector of length $\prod_{d = 1}^{D}I_{d}$ where 
	$\vect(\sfA)\left[ i_{1} + \sum_{d = 2}^{D} \left( \prod_{k = 1}^{d-1}I_{k}\right)  (i_{d} - 1) \right]  = a_{i_{1}, \cdots, i_{D}}$.
Similarly, one can perform $d$-mode matricization, or unfolding, on a $D$-array $\sfA$, to form a matrix $\bA_{(d)}$ with $I_{d}$ rows 
and $\prod_{d': d' \neq d}I_{d'}$ columns where the element $a_{i_{1}, \cdots, i_{D}}$ is at the row $i_{d}$ and column 
$\left\{1+\sum_{d_{1}=1 (\neq d)}^{D}(i_{d_{1}}-1)\prod_{d_{2}=1(\neq d)}^{d_{1}-1}I_{d_{2}}\right\}$, which reshapes the tensor to a matrix corresponding to a fixed mode.
A $D$-way tensor $\sfA$ has rank-1 when it is the outer product of the $D$ vectors $\bu^{(1)}, \cdots, \bu^{(D)}$ 
which is denoted by $\bu^{(1)}\circ \cdots\circ\bu^{(D)}$. 
Mathematically, $a_{i_{1}, \cdots, i_{D}} = u_{i_{1}}^{(1)}u_{i_{2}}^{(2)}\cdots u_{i_{D}}^{(D)}$ for all possible choices of indices $(i_{1}, \cdots, i_{D})$. 
The rank of a tensor $\sfA$ is $R$ if it is the minimal number of rank-1 tensors that form $\sfA$ as a linear combination. \par
Now the question is how to express the tensor as the sum of a finite number of rank-one tensors? The answer comes from Psychometrics in the form of canonical decomposition or CANDECOMP 
and parallel factors or PARAFAC 
and from the literature on tensor decomposition where CANDECOMP/PARAFAC (CP) decomposition provides an extension of matrix singular value decomposition \cite{kiers2000towards, tucker1966some}. 
CP decomposition, therefore, factorizes a tensor into a sum of component rank-one tensors, mathematically, 
$\sfA = \sum_{r = 1}^{R} \bu_{r}^{(1)}\circ\cdots\circ\bu_{r}^{(D)}$
where  $\bu_{r_{d}}^{(d)} \in \bbR^{I_{d}}, d = 1, \cdots, D$ are column vectors and $\sfA$ cannot be written as a sum of less then $R$ outer product.
Often CP decomposition is indicated by $\sfA = \left[ \left[ \bU_{1}, \cdots, \bU_{D} \right] \right] $ where $\bU_{1}, \cdots, \bU_{D}$ have linearly independent columns $\bU_{d} = [\bu_{1}^{(d)}, \cdots, \bu_{R}^{(d)}] \in \bbR^{I_{d} \times R}$ for each $d = 1, \cdots D$. 
There are several kinds of inner products for higher-order tensors. The scalar product $\left\langle \sfA, \sfB \right\rangle$ of two $D$-dimensional tensors is defined as
$\left\langle \sfA, \sfB \right\rangle = \sum_{i_{1}, \cdots, i_{D}} b_{i_{1}, \cdots, i_{D}}a_{i_{1}, \cdots, i_{D}}$
The Frobenius norm of tensor $\sfA$ is defined as $\left\| \sfA \right\|_{\sF} = \sqrt{\left\langle \sfA, \sfA \right\rangle}$. 
In this paper, we consider the contracted tensor product between two tensors with different mode dimensions. 
For two tensors $\sfA \in \bbR^{I_{1}\times \cdots \times I_{K}\times P_{1} \times \cdots \times P_{L}}$ and 
$\sfB \in \bbR^{P_{1} \times \cdots \times P_{L}\times Q_{1}\times \cdots \times Q_{M}}$, contracted tensor product \cite{Lock2018, raskutti2019} is defined as $\left\langle  \sfA, \sfB \right\rangle_{L}$ with $(i_{1}, \cdots, i_{K}, q_{1}, \cdots, q_{M})$-th element $\sum_{p_{1}, \cdots, p_{L}}a_{i_{1}, \cdots, i_{K}, p_{1}, \cdots, p_{L}} b_{p_{1}, \cdots, p_{L}, q_{1}, \cdots, q_{K}}$. 
\section{Tensor-on-tensor functional regression}
\label{sec:general}
In this section, we discuss tensor-on-tensor functional regression with time-varying coefficients. 
Let $\sfY(t) \in \bbR^{Q_{1}\times\cdots\times Q_{M}}$ with $(q_{1}, \cdots, q_{M})$-th element $y_{q_{1}, \cdots, q_{M}}$ for all possible indices be a set of time-varying response variables observed at time $t$ and $\left\lbrace \sfY(t) : t\in\sT\right\rbrace$ be the underlying continuous stochastic process defined on a compact interval $\sT$. 
Without loss of generality, we assume $\sT = \left[ 0, T\right], T > 0$. Suppose there are $N$ individuals/trajectories on $\sT$. 
Observations are taken at $J$ distinct points for each individual. 
Collection of points for the $i$-th individual is denoted as $\sT^{\dagger}_{i} = \left\lbrace 0 \leq t_{i1} < \cdots < t_{iJ} \leq T \right\rbrace$. 
Therefore, for $i$-th individual at a set of discrete time-points $\sT^{\dagger}_{i}$, we observe the responses $\sfY_{i}(t_{i}) = \left( \sfY_{i}(t_{i1}), \cdots, \sfY_{i}(t_{iJ}) \right) \in \bbR^{J \times Q_{1}\times\cdots\times Q_{M}}$ which are distinct realizations of the corresponding stochastic process.
The covariate $\sfX(t) \in \bbR^{P_{1}\times\cdots\times P_{L}}$ with $(p_{1}, \cdots, p_{L})$-th element $x_{p_{1}, \cdots, p_{L}}(t)$ for all indices, observed at $\sT^{\dagger}_{i}$ is denoted as $\sfX_{i}(t_{i}) = ( \sfX_{i}(t_{i1}), \cdots, \sfX_{i}(t_{iJ})) \in \bbR^{J \times P_{1}\times\cdots\times P_{L}}$. 
The time-varying tensor coefficient $\bbeta(t) \in \bbR^{P_{1}\times \cdots\times P_{L} \times Q_{1} \times \cdots \times Q_{M}}$ is assumed to vary over time smoothly. 
Therefore, we can apply local polynomial smoothing\cite{eubank1999nonparametric},
smoothing splines \cite{green1993nonparametric},
regression splines\cite{eubank1999nonparametric},
P-splines \cite{ruppert2003semiparametric}. 
 In this paper, we use B-spline bases which are very popular in mathematics, computer science, and statistics \cite{de1978practical}.
Now, for $1 \leq p_{l} \leq P_{l}, 1 \leq q_{m} \leq Q_{m}, 1\leq l \leq L, 1\leq m \leq M$, each function $\beta_{p_{1}, \cdots, p_{L} , q_{1}, \cdots, q_{M}}(t)$ can be approximated by 
\begin{align*}
\label{eq:basisExp}    
&\beta_{p_{1}, \cdots, p_{L} , q_{1}, \cdots, q_{M}}(t) \\
&= \sum_{h = 1}^{H}b_{h, p_{1}, \cdots, p_{L} , q_{1}, \cdots, q_{M}}\bbB_{h}(t) = \bb_{p_{1}, \cdots, p_{L} , q_{1}, \cdots, q_{M}}^{\tp}\sB(t)
\numberthis
\end{align*}
where $\bb_{p_{1}, \cdots, p_{L} , q_{1}, \cdots, q_{M}} = (b_{1, p_{1}, \cdots, p_{L} , q_{1}, \cdots, q_{M}}, \cdots, b_{H, p_{1}, \cdots, p_{L} , q_{1}, \cdots, q_{M}})^{\tp}$ is the collection of basis coefficients and $\sB(t) = (\bbB_{1}(t), \cdots, \bbB_{H}(t))^{\tp}$ is a vector of known B-spline bases. 
\par
In practice, we can use mode-wise different basis functions to approximate $\beta_{p_{1}, \cdots, p_{L} , q_{1}, \cdots, q_{M}}(t)$.
However, for convenience, we use the same set of bases in this paper.
Instead of B-spline, one can use other basis functions to approximate the coefficient functions. 
We use the B-spline base for its simplicity and numerical tractability. Although this method does not produce a desirable approximation for discontinuous functions, in this paper,
we restrict ourselves to smooth continuous coefficients.\par
We propose a general time-varying tensor-on-tensor regression model,
\begin{equation}
    \sfY_{i}(t) = \left<\sfX_{i}(t), \bbeta(t)\right>_{L}+\sfE_{i}(t)
\end{equation}
which can be reduced into the following mode-wise time-varying coefficient model.
\begin{align*}
&y_{i, q_{1}, \cdots, q_{M}} (t)\\
&= \sum_{p_{1} = 1}^{P_{1}}\cdots \sum_{p_{L} = 1}^{P_{L}} x_{i, p_{1}, \cdots, p_{L}} (t) \beta_{p_{1}, \cdots, p_{L} , q_{1}, \cdots, q_{M}}(t) + \epsilon_{i, q_{1}, \cdots, q_{M}}(t)
\numberthis
\end{align*}
where $\epsilon_{i, q_{1}, \cdots, q_{M}}(t)$ is a random error with mean zero. Errors can be correlated over time and modes, but are independent over the trajectories. After plugging-in the approximate expression of $\beta_{p_{1}, \cdots, p_{L} , q_{1}, \cdots, q_{M}}(t)$ at each mode, the model can now be expressed as
\begingroup
\allowdisplaybreaks
\begin{align*}
\label{eq:model}
&y_{i, q_{1}, \cdots, q_{M}} (t)\\
&= \sum_{p_{1} = 1}^{P_{1}}\cdots \sum_{p_{L} = 1}^{P_{L}} \sum_{h = 1}^{H}b_{h, p_{1}, \cdots, p_{L} , q_{1}, \cdots, q_{M}} x_{i, p_{1}, \cdots, p_{L}} (t) \bbB_{h}(t)\\
&\qquad + \epsilon_{i, q_{1}, \cdots, q_{M}}(t)  
\numberthis
\end{align*}
\endgroup
The multi-dimensional basis coefficients $\sfB_{0}= \left\lbrace b_{h, p_{1}, \cdots, p_{L} , q_{1}, \cdots, q_{M}}:\right.$ $1 \leq h \leq H$, $1 \leq p_{l} \leq P_{l}$, $1\leq q_{m} \leq Q_{m}$, $1 \leq l \leq L$, $\left.1\leq m \leq M\right\rbrace$ 
can be estimated by minimizing mode-wise penalized integrated sum of square errors with respect to $\sfB_{0}$. 
Let us denote the smoothness penalty by 
$\Omega_{sm}$ where
\begingroup
\allowdisplaybreaks
\begin{align*}
\label{eq:SmoothPen}    
\Omega_{sm}(\sfB_{0})
&=\sum_{p_{1} = 1}^{P_{1}}\cdots \sum_{p_{L} = 1}^{P_{L}}\sum_{q_{1} = 1}^{Q_{1}} \cdots \sum_{q_{M}= 1}^{Q_{M}}
\int \theta_{p_{1}, \cdots, p_{L}, q_{1}, \cdots, q_{M}} \\
& \qquad \times 
\left\lbrace  \beta_{p_{1}, \cdots, p_{L} , q_{1}, \cdots, q_{M}}''(t) \right\rbrace^{2} dt \\
&= \sum_{p_{1} = 1}^{P_{1}} \cdots \sum_{p_{L} = 1}^{P_{L}}\sum_{q_{1} = 1}^{Q_{1}} \cdots \sum_{q_{M} = 1}^{Q_{M}}  \theta_{p_{1}, \cdots, p_{L}, q_{1}, \cdots, q_{M}} \\
& \qquad \times \bb_{p_{1}, \cdots, p_{L} , q_{1}, \cdots, q_{M}}^{\tp}\int \bB''(t)\bB''(t)^{\tp}dt  \bb_{p_{1}, \cdots, p_{L} , q_{1}, \cdots, q_{M}}
\numberthis
\end{align*}
\endgroup
Hence, the loss function turns out to be
\begin{align*}
    \label{LossFunction}    
&\sL(\sfB_{0})\\
&=  \frac{1}{N}\int_{\sT}\sum_{i = 1}^{N} \sum_{q_{1}=1}^{Q_{1}}\cdots \sum_{q_{M} = 1}^{Q_{M}} \bigg(  y_{i, q_{1}, \cdots, q_{M}}(t) \bigg. \\
& \bigg.  
-   \sum_{p_{1} = 1}^{P_{1}}\cdots \sum_{p_{L} = 1}^{P_{L}}\sum_{h = 1}^{H}b_{h, p_{1}, \cdots, p_{L} , q_{1}, \cdots, q_{M}} x_{i, p_{1}, \cdots, p_{L}} (t) B_{h}(t)\bigg)^{2}dt  \\ 
&\qquad + \Omega_{sm}(\sfB_{0})  
\numberthis    
\end{align*}
In Equation (\ref{eq:SmoothPen}), 
$\lbrace \theta_{p_{1}, \cdots, p_{L}, q_{1}, \cdots, q_{M}}\rbrace_{p_{1}, \cdots, p_{L}, q_{1}, \cdots, q_{M}}$ are the tuning parameters for smoothness. 
The use of smoothness penalties is widespread in the functional data analysis literature (see \cite{ramsay2005springer} among many others). 
In practice, it is unrealistic to determine these large numbers of pre-assigned tuning parameters.
By considering $\theta_{p_{1}, \cdots, p_{L}, q_{1}, \cdots, q_{M}} = \theta$, for all possible $p_{1}, \cdots, p_{L}, q_{1}, \cdots, q_{M}$, the simplest version of smoothness penalty would be, $\Omega_{sm}(\sfB_{0}) = \theta \vect(\sfB_{0})^{\tp}(\bI_{Q} \otimes \bI_{P} \otimes \int \bB''(t)\bB''(t)^{\tp}dt )\vect(\sfB_{0})$.
Note $\vect(\sfB_{0}) = (\bb_{11}, \cdots, \bb_{P1}, \bb_{12}, \cdots, \bb_{P2}, \cdots, \bb_{1Q}, \cdots, \bb_{PQ})^{\tp}$.
Therefore, the penalized likelihood estimating equation for the functional tensor-on-tensor regression problem is 
\begin{equation}
\label{LossFunctionTensor}
    \sL(\sfB_{0}) = \int_{\sT}\frac{1}{N}\sum_{i = 1}^{N} \left\| \sfY_{i}(t) - \left\langle \sfZ_{i}(t), \sfB_{0}\right\rangle_{L+1} \right\|_{\sF}^{2}dt + \Omega_{sm}(\sfB_{0})
\end{equation}
where $\left\langle \cdot, \cdot\right\rangle_{L+1}$ is the contracted tensor product defined in Section \ref{sec:notation} and $\|\cdot\|_{\sF}$ is the Frobenius norm. 
The first term of Equation (\ref{LossFunctionTensor}) is the integrated sum of squares, and the second term is the smoothness penalty.\par
Let the response tensor for time $t$, $\sfY(t) \in  \bbR^{N \times Q_{1}\times \cdots\times Q_{M}}$ with its $(i, q_{1}, \cdots, q_{M})$-th element be $y_{i, q_{1}, \cdots, q_{M}}(t)$ for all $i = 1, \cdots, N; \, q_{m} = 1, \cdots, Q_{m};  \, m = 1, \cdots, M$.
Similarly, we define an updated covariate tensor contaminated with B-spline bases $\sfZ(t) \in \bbR^{N \times H \times P_{1}\times \cdots\times P_{L}}$ where the $(i, h, p_{1}, \cdots, p_{L})$-th element of the tensor is defined as $z_{i,h, p_{1}, \cdots, p_{L}}(t) = x_{i, p_{1}, \cdots, p_{L}}(t)\bbB_{h}(t)$. Therefore, the corresponding penalized loss function in Equation (\ref{LossFunctionTensor}) is equivalent to $\sL(\sfB_{0}) = \int_{\sT}\left\|\sfY(t) - \left\langle \sfZ(t), \sfB_{0}\right\rangle_{L+1} \right\|_{\sF}^{2}dt + \Omega_{sm}(\sfB_{0})$.
\begin{remark}
For $Q = 0$, the proposed model reduces to the classical concurrent linear model \cite{ramsay2005springer}. 
For $Q = 1$ and $P = 1$, the time-varying network model \cite{Xue2018} is a special case of our proposed model for a specific choice of covariates. 
For $Q = 2$, $y_{i, q_{1}, q_{2}}(t)$ is the observation of the quantity of interest at time $t$ for sub-unit $q_{2}$ from unit $q_{1}$ of a treatment group $i$ in a hierarchical model \cite{zhou2010reduced}. 
\end{remark}
\par
Let $P = \prod_{l = 1}^{L}P_{l}$ be the total number of predictors for each observation and $Q = \prod_{m = 1}^{M}Q_{m}$ be the total number of outcomes for each predictor over time. 
To minimize the penalized integrated sum of squared residuals described, the solution for $\sfB_{0}$ might be inconsistent. Since the unknown coefficient tensor $\sfB_{0}$ has $H\prod_{l = 1}^{L} P_{l}  \prod_{m = 1}^{M} Q_{m}$ parameters, we need to adopt a dimension reduction technique. 
Inspired by the novel idea discussed in \cite{Lock2018}, 
we consider the rank $R$ decomposition of $\sfB_{0}$ as $\sfB_{0} = \left[ \left[ \bU_{0}, \bU_{1}, \cdots, \bU_{L} , \bV_{1}, \cdots, \bV_{M} \right] \right]$ where $\bU_{0}$, $\bU_{l}$ and $\bV_{m}$ are matrices with dimensions  $H\times R$, $P_{l}\times R$ and  $Q_{m}\times R$, respectively, for all $1 \leq l \leq L, 1 \leq m \leq M$. 
After dimension reduction, the number of unknown parameters reduces to $R(H+\sum_{l = 1}^{L} P_{l}+\sum_{m = 1}^{M} Q_{m})$. 
Therefore, the estimate of the coefficient tensor is $\tilde{\sfB}_{0} = \arg\min_{\rank(\sfB_{0}) \leq R} \sL(\sfB_{0})$.
However, this estimated coefficient tensor suffers from over-fitting and instability problems due to multi-collinearity of $\sfZ$ and/or the large number of observed outcomes. 
Thus, we obtain an alternative estimate of coefficient tensor $\sfB_{0}$ as $\hat{\sfB}_{0} = \arg\min_{\rank(\sfB_{0}) \leq R}  \sQ(\sfB_{0})$
based on the modified loss function,  $\sQ$, defined by 
\begingroup
\allowdisplaybreaks
\begin{equation}
\label{LossMod}
 \sQ(\bB_{0}) =  \frac{1}{N}\int_{\sT}\left\| \sfY(t) - \left\langle \sfZ(t), \sfB_{0}\right\rangle_{L+1} \right\|_{\sF}^{2}dt + \Omega(\sfB_{0})
\end{equation}
\endgroup
where 
\begingroup
\allowdisplaybreaks
\begin{align*}
\label{pen}
&\Omega(\sfB_{0})\\
&= \theta \vect(\sfB_{0})^{\tp} (\bI_{Q} \otimes \bI_{P} \otimes \int \bB''(t)\bB''(t)^{\tp}dt )\vect(\sfB_{0})\\
& \qquad + \phi \vect(\sfB_{0})^{\tp} \vect(\sfB_{0}).
\numberthis
\end{align*}
\endgroup
Equation (\ref{pen}) suggests performing penalization of the smoothness and sparsity of the coefficient functions simultaneously. 
\begin{remark}
For fixed rank $R_{0}$, the number of knots and tuning parameters $\theta$ and $\phi$ are unknown and can be selected using Mallows's $C_{p}$ \cite{mallows151973}, generalized cross-validation  \cite{craven1978smoothing}.
To choose the rank of the CP-decomposition, we choose BIC-type information criterion for chosen tuning parameters $\widehat{\theta}$ and $\widehat{\phi}$, $\text{BIC} = -2l(\sfB_{0}(R_{0}); \widehat{\theta}, \widehat{\phi}) + \log(NJ)p_{e}$, where $l$ is the log-likelihood evaluated at $\sfB$ with working rank $R_{0}$ and $p_{e}$ is the effective number of parameters. 
\end{remark}

\section{Theory}
\label{sec:main-result}
In this section, we will study identifiablity of the model and consistency of the parameter estimates under our proposed model as the number of subjects $N$ goes to infinity, while assuming that the rank of the basis tensor coefficient is known and fixed.
\subsection{Identifiability}
\label{subsec:identifiability}
Identifiability issues play important roles in tensor regression \cite{Lock2018,zhou2013tensor,guhaniyogi2017bayesian}. The model discussed in Section \ref{sec:general} would be identifiable for $\bbeta(t)$, if $\bbeta(t) \neq \bbeta^{*}(t)$ implies $\left< \sfX(t), \bbeta(t) \right>_{L} \neq \left< \sfX(t), \bbeta^{*}(t) \right>_{L}$ for some $t \in \sT$ and some $\sfX(t) \in \bbR^{P_{1} \times \cdots \times P_{L}}$. 
Using the basis expansion in Equation (\ref{eq:basisExp}), we can say that $\sfB_{0}$ is identifiable if and only if $\bbeta(t)$ is identifiable for all $t \in \sT$. 
Therefore, the reduced model is identifiable if $\sfB_{0} \neq \sfB_{0}^{*}$ implies $\left< \sfZ(t) , \sfB_{0}\right>_{L+1} = \left< \sfZ(t) , \sfB_{0}^{*}\right>_{L+1}$ for some $t \in \sT$ and for some $\sfZ(t) \in \bbR^{H \times P_{1} \times \cdots \times P_{L}}$.
Let us assume, for $t = t_{0}$,  $\sfZ_{h, p_{k_{1}}, \cdots, p_{k_{L}}}(t_{0}) = 1$ at $k_{1} = 1, \cdots, k_{L} = L$ and 0 otherwise, then the product becomes $b_{h, p_{1}, \cdots, p_{L}, q_{1}, \cdots, q_{M}}$. 
Furthermore, $\bU_{0}, \bU_{1}, \cdots, \bU_{l}, \bV_{1}, \cdots, \bV_{M}$ in the expression of CP-decomposition is not identifiable. 
Therefore,
the identifiability conditions can be imposed in the following way \cite{sidiropoulos2000uniqueness}.
\begin{enumerate}
    \item Restrictions for scale and non-uniqueness: $\sfB_{0}$ will remain the same after replacing $\bU_{0}$, $\bU_{l}$ and $\bV_{m}$ by $c_{s}\bU_{0}$, $c_{u_{l}}\bU_{l}$ and $c_{v_{m}}\bV_{m}$ respectively, 
    where $\lbrace c_{s},c_{u_{l}}, c_{v_{m}}\rbrace$ is the set of constants with $c_{s}\prod_{l = 1}^{L}c_{u_{l}}\prod_{l = 1}^{M}c_{v_{l}} = 1$. 
    This problem can be solved by introducing the condition that the norm of each of $\bu_{r0}$, $\bu_{rl}$ and $\bv_{rm}$ is set to 1, $1 \leq r \leq R, 1 \leq l \leq L, 1 \leq m \leq M$. 
    \item Restriction for permutation: For any permutation $\pi(\cdot)$ of $\lbrace 1, \cdots, R \rbrace$, $\sum_{r = 1}^{R} \bu_{r0}\circ \bu_{r1} \circ \cdots \circ \bu_{rL} \circ \bv_{r1} \circ \cdots \circ \bv_{rM}$ 
    is the same as 
    $\sum_{r = 1}^{R} \bu_{\pi(r)0}\circ \bu_{\pi(r)1} \circ \cdots \circ \bu_{\pi(r)L} \circ \bv_{\pi(r)1} \circ \cdots \circ \bv_{\pi(r)M}$. 
    Therefore, we impose the restriction $\|\bu_{01}\| \geq \cdots \geq \|\bu_{0R}\|$.
\end{enumerate}
These conditions ensure identifiability for $L+M \geq 2$. Therefore, we do not need the additional orthogonality condition used in \cite{Lock2018,zhou2013tensor,guhaniyogi2017bayesian}. 
\subsection{Convergence rate}
\label{subsec:rate}
In this subsection, we study the asymptotic properties of the estimate of time-varying tensor regression parameter $\bbeta(t)$ based on polynomial spline approximation and the CP decomposition. 
Since the number of modes is fixed, we reduce the objective function following the notation $\bY \in \bbR^{NJ \times Q}$ and $\sfZ \in \bbR^{NJ\times H_{N} \times P}$ and, therefore, $\sQ(\bB_{0}) = \frac{1}{NJ}\|\bY - \left<\sfZ, \sfB_{0} \right>_{2}\|_{\sF}^{2} 
    + \| \sfB_{0}\|_{\sF, \bW_{\omega}}^{2}$
where $\| \sfB_{0}\|_{\sF, \bW_{\omega}}$ is the weighted Frobenius norm defined as $\| \sfB_{0}\|_{\sF, \bW_{\omega}} = \sqrt{\vect(\sfB_{0})^{\tp}\bW_{\omega}\vect(\bB_{0})}$ where $\omega$ is a set of tuning parameters. 
Moreover, assume that $\rank(\sfB_{0}) = R_{0}$ which is assumed to be known and fixed. 
To proceed further, we introduce some regularity conditions required to establish the asymptotic properties. 
\renewcommand{\labelenumi}{(C\arabic{enumi})}
\begin{enumerate}
    \item \label{cond:time} The observation times $t_{ij}$ for $i = 1, \cdots, N; j = 1, \cdots, J$ are independent and follow a distribution $f_{T}(t)$ over the support $\sT$. The density function $f_{T}(t)$ is assumed to be absolutely continuous and bounded by a nonzero and finite constant.
    \item \label{cond:knots} Let $\left\lbrace \tau_{h}\right\rbrace_{h = 1}^{K_{n}}$ be $K_{n}$ interior knots within the compact interval $\sK = \left[0, 1\right]$  and denote the partition of the interval $\left[0, T \right]$  with $K_{N}$ knots as $\mathbb{I} = \left\lbrace 0 = \tau_{0} < \tau_{1} < \cdots < \tau_{K_{N}} < \tau_{K_{N+1}} = 1 \right\rbrace$. 
    \item \label{cond:spline} The polynomial spline of order $v + 1$ are the functions with degree $v$ of polynomials on the interval $\left[ \tau_{h - 1}, \tau_{h} \right)$ for $h = 1, \cdots, K$ and $\left[\tau_{K_{N}}, \tau_{K _{N}+ 1} \right]$ and $v - 1$ continuous derivatives globally. 
    \item \label{cond:ep} For $t \in \sT$, $\epsilon_{i, q_{1}, \cdots, q_{M}}(t)$'s are i.i.d. copies with mean zero and finite second order moment over $i$. Moreover, for each $i$ the coordinates $q_{1}, \cdots, q_{M}$, $\epsilon_{i, q_{1},\cdots, q_{M}}(t_{ij})$ are locally stationary time series of the form given in appendix. 
    Assume the physical dependence measure $\Delta(k, a)$ is upper bounded by $k^{-\kappa_{0}}$ for some positive $\kappa_{0}$ and for all $j \geq 1$. 
    \item \label{cond:x} The covariates $x_{i, p_{1}, \cdots, p_{L}}(t)$ are i.i.d. for index $i$ and they is bounded almost everywhere.
    \item \label{cond:eigen}
   $\lambda_{\min}\left(\sfZ_{(1)}^{\tp}\bZ_{(1)}\right) 
    = \sigma_{\min}(\sfZ_{(1)})^{2} \geq \lambda_{\min}(\bbB^{\tp}\bbB)\lambda_{\min}(\bX^{\tp}\bX) > \lambda$ where $\lambda_{i}(\bA)$ and $\sigma_{i}(\bA)$ denote $i$-th eigen-value and singular value, respectively, for a matrix $\bA$. 
\end{enumerate}
\begin{remark}
Conditions \ref{cond:time}, \ref{cond:knots}, \ref{cond:spline} are standard conditions in the context of polynomial spline regression and are required to ensure the consistency of the spline estimation of the varying-coefficient models. Condition \ref{cond:spline} provides the degree of smoothness on the time-varying coefficients. We assume condition \ref{cond:ep} to represent a wide class of stationary, locally stationary, and non-linear processes. Similar conditions can be found in \cite{ding2020estimation, ding2021multivariate}. This is a natural assumption of temporal short-range dependency where temporal correlation decays in polynomial order. This phenomenon can also be observed in the well-known Ornstein–Uhlenbeck process and the linear process with the standard basis expansion $\epsilon_{i, \bullet}(t) = \sum_{k = 1}^{\infty}a_{ik, \bullet}\phi_{k}(t)$ where $a_{ik, \bullet}$ is an uncorrelated mean zero, finite variance random variable over $(i, k)$ and $\sup_{t}\phi_{k}(t) \leq Ck^{-a}$ for some positive constants $C$ and $a$ . 
\end{remark}
Define, the constants $\sC(\delta) = 1 + 2/\delta$ such that $\sC(\delta)\leq \lambda^{2}/2\mu$ where $\mu = (NJ)(\theta\lambda_{\max}(\int \bB''(t)\bB''(t)^{\tp}dt) + \phi)\sqrt{2R_{0}}$. 
Further define, $\xi = \sup_{1\leq h \leq H}\sup_{t\in [0, 1]}|\bbB_{h}(t)|$ which is typically bounded. Further, define $\sigma_{1}(\sfC) = \max\{\sigma_{1}(\sfC_{(1)}), \sigma_{1}(\sfC_{(2)}), \sigma_{1}(\sfC_{(3)}) \}$.
We propose the following theorem for the estimation and prediction performance of the coefficient tensor. 
\begin{theorem}
\label{theorem:coeffTensor}
Under assumptions \ref{cond:ep} and \ref{cond:eigen}, when both the number of time-points and trajectories are large enough, there exists a constant $C_{a}$, so that
with probability at least 
$1-C_{a}N^{-a\tau}$,
\begin{align*}
 &\|\left< \sfZ, (\hat{\sfB}_{0} - \sfB_{0})\right>\|_{\sF}^{2}\\
        & \leq \lambda^{-1}\left(
            \sC(\delta)^{-1} - 2\mu\lambda^{-2}
        \right)^{-1}\\
        & \qquad \times
        \left\{
            4\mu\sigma_{1}^{2}(\sfC) + {2R_{0}}(1+\delta)Q^{2}\xi^{2} N^{2\tau+2}{J}
        \right\}   
    \numberthis
\end{align*}
for any $H_{N} \times P \times Q$ matrix $\sfC$ with $\rank(\sfC) \leq R_{0}$, By choosing $\sfC = \sfB_{0}$, a simplified prediction error could be obtained. Under the same set of assumptions, the estimation error of the matrix $\sfB_{0}$ is 
$\|\hat{\bB}_{0} - \bB_{0}\|_{\sF}^{2} \leq \lambda^{-1}\left(
            \sC(\delta)^{-1} - 2\mu\lambda^{-2}
        \right)^{-1}
        \left\{
            4\mu\sigma_{1}^{2}(\sfC) + {2R_{0}}(1+\delta)Q^{2}\xi^{2} N^{2\tau+2}{J}
        \right\}$.
\end{theorem}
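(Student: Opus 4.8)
\emph{Proof plan.} The plan is to run a basic-inequality (oracle-type) argument adapted to the non-convex rank-$R_0$ constraint. Work in the reduced notation, writing the model as $\bY=\langle\sfZ,\sfB_0\rangle_{2}+\bE$ (any spline approximation bias absorbed into $\bE$), and fix an arbitrary competitor $\sfC$ with $\rank(\sfC)\le R_0$. Since $\hat{\sfB}_0$ minimizes $\sQ$ over $\{\rank\le R_0\}$ and $\sfC$ lies there, $\sQ(\hat{\sfB}_0)\le\sQ(\sfC)$; substituting the model, expanding the squared Frobenius norms, and cancelling the common $\|\bE\|_{\sF}^2$ gives, with $\Delta:=\hat{\sfB}_0-\sfB_0$,
\begin{align*}
\frac{1}{NJ}\|\langle\sfZ,\Delta\rangle\|_{\sF}^2
&\le\frac{1}{NJ}\|\langle\sfZ,\sfC-\sfB_0\rangle\|_{\sF}^2+\frac{2}{NJ}\langle\hat{\sfB}_0-\sfC,\sfG\rangle\\
&\quad+\|\sfC\|_{\sF,\bW_{\omega}}^2-\|\hat{\sfB}_0\|_{\sF,\bW_{\omega}}^2,
\end{align*}
where $\sfG$ is the $H_N\times P\times Q$ tensor obtained by contracting $\sfZ$ with $\bE$ over the sample index (adjointness of the contracted product). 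Expanding the penalty difference around $\sfB_0$ turns it into a further, deterministic, cross term $-2\langle\sfB_0,\hat{\sfB}_0-\sfC\rangle_{\bW_{\omega}}$ plus a bias term bounded by $\lambda_{\max}(\bW_{\omega})$ times $\|\sfC\|_{\sF}^2$, and by the unit-norm CP normalization of Section \ref{subsec:identifiability} and $\lambda_{\max}(\bW_{\omega})\le\theta\lambda_{\max}(\int\bB''(t)\bB''(t)^{\tp}dt)+\phi$ this contributes the $\mu\,\sigma_1^2(\sfC)$ part of the final bound (equal to $\mu\,\sigma_1^2(\sfB_0)$ once $\sfC=\sfB_0$).

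I would then control the two cross terms. As $\hat{\sfB}_0$ and $\sfC$ each have CP-rank $\le R_0$, their difference has CP-rank $\le2R_0$ by subadditivity, so every mode unfolding $(\hat{\sfB}_0-\sfC)_{(k)}$ has matrix rank $\le2R_0$, and matrix H\"older duality on any unfolding gives $\langle\hat{\sfB}_0-\sfC,\sfG\rangle\le\sqrt{2R_0}\,\|\hat{\sfB}_0-\sfC\|_{\sF}\,\sigma_1(\sfG)$, with $\sigma_1(\sfG)$ the maximum of $\sigma_1$ over the three unfoldings as in the statement (and analogously for the deterministic cross term, with $\sigma_1(\sfG)$ replaced by $\sqrt{\lambda_{\max}(\bW_{\omega})}\,\sigma_1(\sfB_0)$). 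It remains to bound $\sigma_1(\sfG)$ on a high-probability event: each entry of $\sfG$ is a weighted partial sum over $(i,j)$ of the errors $\epsilon_{i,\bullet}(t_{ij})$ with weights $z_{ij,hp}=x_{i,\bullet}(t_{ij})\bbB_h(t_{ij})$, uniformly bounded via \ref{cond:x} and $\xi$. Invoking \ref{cond:ep} --- locally stationary errors with a polynomially decaying physical dependence measure --- a Nagaev/Bernstein-type tail bound for such sums, together with a union bound over the $O(H_NPQ)$ entries of $\sfG$ and a covering of low-rank directions, yields $\sigma_1(\sfG)\le C\,Q\,\xi\,N^{\tau+1}\sqrt{J}$ with probability at least $1-C_aN^{-a\tau}$, whose square gives the $Q^2\xi^2N^{2\tau+2}J$ factor.

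Finally I would assemble using \ref{cond:eigen}: from $\lambda_{\min}(\sfZ_{(1)}^{\tp}\sfZ_{(1)})>\lambda$ one has $\|\langle\sfZ,\sfA\rangle\|_{\sF}^2\ge\lambda\|\sfA\|_{\sF}^2$ for every coefficient tensor $\sfA$, so $\|\hat{\sfB}_0-\sfC\|_{\sF}\le\lambda^{-1/2}\big(\|\langle\sfZ,\Delta\rangle\|_{\sF}+\|\langle\sfZ,\sfC-\sfB_0\rangle\|_{\sF}\big)$, and each cross term is at most a constant times $(\text{prediction error})^{1/2}$ times its noise amplitude. Young's inequality with parameter $\delta$ splits these products: the pieces proportional to $\|\langle\sfZ,\Delta\rangle\|_{\sF}^2$ move to the left, producing the effective leading coefficient $\sC(\delta)^{-1}-2\mu\lambda^{-2}$, which is nonnegative precisely under the stated requirement $\sC(\delta)\le\lambda^{2}/2\mu$ (and is why $N$ and $J$ must be large), while the remaining pieces, scaled by $(1+\delta)$, stay on the right and contribute $4\mu\sigma_1^2(\sfC)$ and $2R_0(1+\delta)Q^2\xi^2N^{2\tau+2}J$ after multiplying through by $NJ$ (absorbed into $\mu$); the residual powers of $\lambda$ in the bound come from this repeated use of \ref{cond:eigen} to pass between $\|\langle\sfZ,\cdot\rangle\|_{\sF}$ and $\|\cdot\|_{\sF}$. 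Solving the resulting scalar inequality for $\|\langle\sfZ,\Delta\rangle\|_{\sF}^2$ gives the displayed prediction-error bound; taking $\sfC=\sfB_0$ kills the $\|\langle\sfZ,\sfC-\sfB_0\rangle\|_{\sF}^2$ term, and one final use of $\|\Delta\|_{\sF}^2\le\lambda^{-1}\|\langle\sfZ,\Delta\rangle\|_{\sF}^2$ yields the estimation-error bound. The main obstacle is the third step --- the uniform high-probability bound on $\sigma_1(\sfG)$ under the \emph{dependent}-error condition \ref{cond:ep} --- which needs a concentration inequality for partial sums with a slowly decaying dependence measure plus careful control of the $N$-growing spline dimension $H_N$; the non-convexity of the rank ball, by comparison, is handled cheaply via subadditivity of CP-rank.
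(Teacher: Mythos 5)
Your skeleton is the paper's skeleton: the basic inequality $\sQ(\hat{\sfB}_{0})\le\sQ(\sfC)$ over the rank-$R_{0}$ ball, expansion around $\sfB_{0}$, an operator-norm-times-nuclear-norm bound on the cross term using $\rank(\hat{\sfB}_{0}-\sfC)\le 2R_{0}$, Young's inequality with $a=1+\delta/2$, $b=\delta/2$ to produce the leading coefficient $\sC(\delta)^{-1}-2\mu\lambda^{-2}$, and a final pass through \ref{cond:eigen} for the estimation bound. Two sub-steps are routed differently. First, you pair the noise against the coefficient increment via the adjoint $\sfG=\langle\sfZ,\bE\rangle$ and then convert $\|\hat{\sfB}_{0}-\sfC\|_{\sF}$ back to the prediction norm with $\lambda^{-1/2}$, whereas the paper keeps the design inside the prediction norm and bounds $\sigma_{1}(\sP\bE)$ for the projection $\sP=\sfZ_{(1)}(\sfZ_{(1)}^{\tp}\sfZ_{(1)})^{-1}\sfZ_{(1)}^{\tp}$ directly (that is exactly what its Lemma 2, imported from the locally-stationary-time-series literature, controls); your version is of the same order but carries an extra conditioning factor of $\sfZ_{(1)}$, and your plan to prove the concentration from scratch via Nagaev-type bounds plus covering is doing work the paper simply outsources to a citation. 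Second, for the penalty difference $\|\sfC\|_{\sF,\bW_{\omega}}^{2}-\|\hat{\sfB}_{0}\|_{\sF,\bW_{\omega}}^{2}$ you propose a quadratic expansion around $\sfB_{0}$ with a Cauchy--Schwarz cross term, while the paper first reduces to $\lambda_{\max}$ times $\|\sfC\|_{\sF}^{2}-\|\hat{\sfB}_{0}\|_{\sF}^{2}$ and then controls that difference through singular values and Weyl's additive perturbation inequality, which is what produces the specific constants $4\mu\sigma_{1}^{2}(\sfC)$ and $2\mu\lambda^{-2}\|\langle\sfZ,\hat{\sfB}_{0}-\sfB_{0}\rangle\|_{\sF}^{2}$; your description of where $\mu\sigma_{1}^{2}(\sfC)$ comes from (the CP unit-norm normalization and a bias term proportional to $\|\sfC\|_{\sF}^{2}$) is the one genuinely hand-wavy spot, though since you ultimately set $\sfC=\sfB_{0}$ the discrepancy is absorbed into constants. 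Net effect: same theorem, same architecture, slightly different plumbing in the two places where the low-rank structure and the penalty interact.
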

Additionally, we introduce the following theorem, which states the consistency result for the coefficient tensor function.
\begin{theorem}
\label{theorem:coeffFunctionTensor}
Under assumptions \ref{cond:time}-\ref{cond:eigen}, with probability,  we have the following with probability $1-C_{a}N^{-a\tau}$,
\begin{align*}
    &\int_{\sT}|\hat{\beta}_{\bullet}(t) - \beta_{\bullet}(t)|^{2}f_{T}(t)dt \\
    &= O\left\{
    \lambda^{-1}\left(
            \sC(\delta)^{-1} - 2\mu\lambda^{-2}
        \right)^{-1}\right.\\
    &  \times \left.\left\{4\mu\sigma_{1}^{2}(\sfC_{(1)}) + {2R_{0}}(1+\delta)Q\xi N^{\tau+1}\sqrt{J}\right\}\right.
         \left.  + 
        K_{N}^{-2(v+1)}
    \right\}
    \numberthis
\end{align*}

\end{theorem}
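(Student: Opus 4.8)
The plan is to split the error $\hat\beta_\bullet(t)-\beta_\bullet(t)$ into a deterministic \emph{spline approximation} error and a random \emph{estimation} error, and to bound the two separately before recombining. Here $\bullet$ denotes a generic multi-index $(p_1,\dots,p_L,q_1,\dots,q_M)$; write $\bb_\bullet$ for the ``true'' basis-coefficient vector of the expansion (\ref{eq:basisExp}) and $\hat\bb_\bullet$ for the corresponding slice of $\hat{\sfB}_0$. Then
\[
\hat\beta_\bullet(t)-\beta_\bullet(t)=\big(\hat\bb_\bullet-\bb_\bullet\big)^{\tp}\sB(t)+\big(\bb_\bullet^{\tp}\sB(t)-\beta_\bullet(t)\big),
\]
so by $(a+b)^2\le 2a^2+2b^2$ and the fact that the number of components $PQ$ is fixed, it is enough to bound $\int_{\sT}\|\langle\hat{\sfB}_0-\sfB_0,\sB(t)\rangle\|_{\sF}^2 f_T(t)\,dt$ and $\int_{\sT}\|\langle\sfB_0,\sB(t)\rangle-\bbeta(t)\|_{\sF}^2 f_T(t)\,dt$.

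For the approximation error I would invoke standard polynomial-spline theory (de Boor): condition (\ref{cond:spline}) endows each $\beta_\bullet(\cdot)$ with enough smoothness that the spline spaces generated by the knot sequence of condition (\ref{cond:knots}) approximate it to order $v+1$, i.e.\ $\bb_\bullet$ can be chosen with $\sup_{t\in[0,1]}|\bb_\bullet^{\tp}\sB(t)-\beta_\bullet(t)|=O(K_N^{-(v+1)})$. Since $f_T$ is bounded (condition (\ref{cond:time})) and there are finitely many components, $\int_{\sT}\|\langle\sfB_0,\sB(t)\rangle-\bbeta(t)\|_{\sF}^2 f_T(t)\,dt=O(K_N^{-2(v+1)})$, which is the second term of the asserted rate.

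For the estimation error I would stack the components and pass to the tensor $\hat{\sfB}_0-\sfB_0$:
\begin{align*}
\int_{\sT}\|\langle\sfB_0,\sB(t)\rangle-\langle\hat{\sfB}_0,\sB(t)\rangle\|_{\sF}^2 f_T(t)\,dt
&=\vect(\hat{\sfB}_0-\sfB_0)^{\tp}\Big(\bI_Q\otimes\bI_P\otimes\int_{\sT}\sB(t)\sB(t)^{\tp}f_T(t)\,dt\Big)\vect(\hat{\sfB}_0-\sfB_0)\\
&\le\lambda_{\max}\Big(\int_{\sT}\sB(t)\sB(t)^{\tp}f_T(t)\,dt\Big)\,\|\hat{\sfB}_0-\sfB_0\|_{\sF}^2.
\end{align*}
The largest eigenvalue of the B-spline Gram matrix is handled with the local-support/partition-of-unity properties of B-splines, the knot geometry of condition (\ref{cond:knots}), and boundedness of $f_T$; it then remains only to plug in the estimation-error bound for $\|\hat{\sfB}_0-\sfB_0\|_{\sF}^2$ furnished by Theorem \ref{theorem:coeffTensor} (with $\sfC=\sfB_0$ and its mode-$1$ unfolding), which holds on an event of probability at least $1-C_aN^{-a\tau}$; since all of (\ref{cond:time})--(\ref{cond:eigen}) are in force, the same event serves the whole statement.

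Adding the two pieces on that event gives exactly the displayed bound: a constant multiple of the Theorem \ref{theorem:coeffTensor} quantity (with the mode-$1$ singular values $\sigma_1(\sfC_{(1)})$) plus $O(K_N^{-2(v+1)})$. The step I expect to be the real obstacle is not this decomposition but the bookkeeping that connects the \emph{discrete} penalized least-squares fit of (\ref{LossMod}) --- an average over the $NJ$ random design times $t_{ij}$ --- to the \emph{population} integral against $f_T$: one must use condition (\ref{cond:eigen}) to relate the empirical Gram matrix of the spline-weighted covariate tensor $\sfZ$ to $\lambda$, and then control, uniformly over the rank-$R_0$ ball, the deviation between the empirical B-spline Gram $\tfrac1{NJ}\sum_{i,j}\sB(t_{ij})\sB(t_{ij})^{\tp}$ and its mean $\int_{\sT}\sB\sB^{\tp}f_T$. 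This is where condition (\ref{cond:ep}) on the physical-dependence measure and the concentration argument behind Theorem \ref{theorem:coeffTensor} re-enter.
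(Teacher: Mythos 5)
Your proposal is correct and follows essentially the same route as the paper: the same split of $\hat\beta_\bullet-\beta_\bullet$ into the estimation term $\sB(t)^{\tp}(\hat{\sfB}_0-\sfB_0)$ and the spline approximation term, the same $(a+b)^2\le 2a^2+2b^2$ step, the same appeal to the de Boor approximation lemma for the $O(K_N^{-2(v+1)})$ piece, and the same substitution of the Theorem \ref{theorem:coeffTensor} estimation bound on the high-probability event (you are in fact more explicit than the paper about the Gram-matrix eigenvalue, which the paper compresses into a ``$\propto$''). The only caveat is one the paper shares: plugging in Theorem \ref{theorem:coeffTensor} verbatim yields the term $Q^{2}\xi^{2}N^{2\tau+2}J$ rather than the $Q\xi N^{\tau+1}\sqrt{J}$ appearing in the statement, an inconsistency internal to the paper rather than a flaw in your argument.
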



\section{Algorithm and implementation}
\label{sec:implementation}
In this section, we propose a general algorithm to estimate the basis coefficient tensor using the objective function described in Section \ref{sec:general}. 
For given time-points $t_{1}, \cdots, t_{J}$, define $\sZ$ and $\sY$ as the combined tensor after staking over all time-points.
Therefore, $\sZ$ and $\sY$ are the tensors of order $NJ \times H \times P_{1}\times \cdots P_{L}\times Q_{1}\times\cdots Q_{M}$ and $NJ \times Q_{1}\times \cdots Q_{M}$, respectively. 
Moreover define, $\breve{\bB}_{0}$ as the matrix of coefficient of order $HP\times Q$, where columns and rows of $\sfB_{0}$ are obtained by vectorizing first $(L+1)$ and last $M$  modes of $\sfB_{0}$, respectively. 
For the alternate expression of the penalty term in Equation (\ref{LossMod}), observe 
(i) $\left\| \sfB_{0} \right\|^{2} = \| \breve{\bB_{0}}\|^{2} = \vect(\sfB_{0})^{\tp}\vect(\sfB_{0}) = \trace(\breve{\bB_{0}}\breve{\bB_{0}}^{\tp})$, where $\trace(\bA)$ denotes the trace of a square matrix $\bA$; 
(ii) $\left[\bI_{Q} \otimes \bI_{P} \otimes \left( \theta \int \bB''(t)\bB''(t)^{\tp}dt + \phi\bI_{H} \right)^{1/2}\right] \vect(\sfB_{0})$ $= \vect\left( (\bI_{P} \otimes \left( \theta \int \bB''(t)\bB''(t)^{\tp}dt + \phi\bI_{H} \right)^{1/2})\breve{\bB_{0}} \bI_{Q}\right)$.

Therefore, equivalently, the optimization problem reduces to an unregulated least-squares problem with modified predictor and outcome variables. To get an estimate of $\sfB_{0}$ use
$\hat{\sfB}_{0} = \arg\min_{\rank(\sfB_{0}) \leq R} \frac{1}{NJ}\int_{\sT}\| \tilde{\sY} - \langle \tilde{\sZ} , {\sfB_{0}} \rangle_{L+1} \|^{2}dt$.
where $\tilde{\sZ} \in \bbR^{(NJ + HP) \times H \times P_{1} \times \cdots \times P_{L}\times Q_{1}\times\cdots Q_{M}}$ and 
$\tilde{\sY} \in \bbR^{(NJ + HP) \times Q_{1} \times \cdots \times Q_{M}}$ such that
the unfolding of $\tilde{\sZ}$ and $\tilde{\sY}$ along the first dimension produces the following matrices:
\begin{equation}
\label{eq:short-tensor}
\tilde{\sZ}_{(1)} = 
\begin{bmatrix}
\sZ_{(1)} \\
(\bI_{P} \otimes \left( \theta \int \bB''(t)\bB''(t)^{\tp}dt + \phi\bI_{H} \right)^{1/2})
\end{bmatrix}    
\end{equation}
and 
\begin{equation}
\tilde{\sY}_{(1)} = 
\begin{bmatrix}
\sY_{(1)} \\
0_{HP \times Q}\\
\end{bmatrix}
\end{equation}
Therefore, $\tilde{\sZ}$ be the contamination of $\sfZ(t)$ along  with smoothing term and the sparsity, and $\tilde{\sY}$ is a contamination of $\sfY(t)$ and the zero tensor function.
Thus, apply the following Algorithm \ref{algo:estimate-beta} to get the estimate of coefficient tensor for known rank of the coefficient array and hence the coefficient function $\bbeta(t)$.

\begin{algorithm}
\caption{Estimation of $\bbeta(t)$}
\label{algo:estimate-beta}
\begin{algorithmic}
\STATE \textbf{Data:} $\sfX(t)$, $\sfY(t)$ for $t \in \left[0, T\right], T > 0$ observed on a grid in $[0,T]$.
\STATE \textbf{Tuning parameters:} $\lbrace\theta, \phi\rbrace$, 
rank $R\in\mathbb{N}$, number of knots $K_{N}$, a vector of known B-spline bases
$\sB(t) = (\bbB_{1}(t), \cdots, \bbB_{H}(t))^{\tp}$.
\STATE \textbf{Stopping parameter:} $\epsilon_{0} > 0$
\STATE \textbf{Create:} $\sZ$ and $\sY$ as mentioned in Equation (\ref{eq:short-tensor}).
\STATE \textbf{Initialize:} $\bU_{0}, \bU_{1}, \cdots, \bU_{L}, \bV_{1}, \cdots, \bV_{M}$ be randomly chosen matrices of specific order.
\WHILE{$\text{Error} > \epsilon_{0}$}
\FOR{$l \gets 1$ to $\#\lbrace H, P_{1}, \cdots, P_{L}\rbrace$}
    \STATE \textbf{Set} $d^{(l)}$ be the $l$-th entry of $\{H, P_{1}, \cdots, P_{L}\}$
    \FOR{$r = 1, \cdots, R$}
        \STATE $\sfC_{r} \gets \langle \tilde{\sZ}, \bu_{r0}\circ \cdots\circ \bu_{r,k-1} \circ \bu_{r,k+1} \circ \cdots \circ \bu_{rL} \circ \bv_{r1} \circ \cdots \circ \bv_{rM}\rangle_{L}$ which is a tensor of dimension $(NJ+ HP) \times d^{(l)} \times Q_{1} \times \cdots \times Q_{M}$
        \STATE Unfolding $\sfC_{r}$ along with dimension corresponding to $d^{(l)}$ 
        \STATE Obtain a $(NJ + HP)Q \times d^{(l)}$ dimension matrix $\bC_{r}$
    \ENDFOR
    \STATE $\bC \gets \left[\bC_{1}, \cdots, \bC_{R}\right] \in \bbR^{(NJ+HP)Q \times Rd^{(l)}}$
    \STATE $\vect(\bU_{l}) \gets (\bC^{\tp}\bC)^{-1}\bC^{\tp}\vect(\tilde{\sY})$ 
\ENDFOR
\FOR{$m \gets 1$ to $\#\lbrace Q_{1}, \cdots, Q_{M}\rbrace$}
    \STATE \textbf{Set} $d^{(m)}$ be the $m$-th entry of $\{Q_{1}, \cdots, Q_{L}\}$
    \STATE $\tilde{\sY}_{d^{(m)}}$ is unfolded along the mode corresponding to $d^{(m)}$ and obtain a $d^{(m)} \times (NJ+HP)\prod_{m \neq k}Q_{m}$
    \FOR{$r = 1, \cdots, R$}
        \STATE $D_{r} \gets \vect(\langle \tilde{\sZ}, \bu_{r0}\circ \bu_{r1} \circ \cdots \circ \bu_{rL} \circ \bv_{r1} \circ \cdots \bv_{r,k-1} \circ \bv_{r,k+1} \circ \cdots \circ \bv_{rM}\rangle_{L+1})$
    \ENDFOR
    \STATE $\bD \gets \left[D_{1}, \cdots, D_{R}\right] \in \bbR^{(NJ+HP)\prod_{m \neq k}Q_{m} \times R}$
    \STATE $\bV_{m} \gets \tilde{\sY}_{d^{(m)}}\bD(\bD^{\tp}\bD)^{-1}$
\ENDFOR
\STATE \textbf{Compute}  $\bB = [[ \bU_{0}, \bU_{1}, \cdots, \bU_{L}, \bV_{1}, \cdots, \bV_{M}]]$ 
\STATE \textbf{Calculate} $\text{Error} = \frac{\|\tilde{\sY} - \left<\tilde{\sZ}, \hat{\bB} \right>_{L+1} \|_{\sF}^{2}}{\|\hat{\sY}\|_{\sF}^{2}}$ 
\ENDWHILE
\STATE \textbf{Compute} $\beta_{p_{1}, \cdots, p_{L} , q_{1}, \cdots, q_{M}}(t) = \bb_{p_{1}, \cdots, p_{L} , q_{1}, \cdots, q_{M}}^{\tp}\sB(t)$
using Equation (\ref{eq:basisExp}) for each node
\end{algorithmic}
\end{algorithm}


\section{Simulation}
\label{sec:simulation}
In this section, we conduct numerical studies to compare the finite sample performance to estimate the four-way time-varying tensor coefficient $\bbeta(t)$. Data are generated from the following model, for each mode $p_{1}, p_{2}, q_{1}, q_{2}$
\begin{align*}
    &y_{i, q_{1}, q_{2}}(t) = \sum_{p_{1} = 1}^{P_{1}}\sum_{p_{2} = 1}^{P_{2}}x_{i, p_{1}, p_{2}}(t)\beta_{p_{1}, p_{2}, q_{1}, q_{2}}(t) + \epsilon_{i, q_{1}, q_{2}}(t),\\
    &i = 1, \cdots, N; t \in [0, 1]
    \numberthis
\end{align*}
The regression functions are given by 
\begin{align*}
 \beta_{ p_{1}, p_{2}, q_{1}, q_{2}}(t) &= 
 p_{1}\cos\left({2\pi t}\right) + 
 q_{1}\sin\left({2\pi t}\right) \\
  & \qquad +
 p_{2}\sin\left({4\pi t}\right) + 
 q_{2}\cos\left({4\pi t}\right)   
\end{align*}
Here, changes in one unit of the index of each mode produce a change in one unit of the coefficient when the time is fixed. The covariates are generated as follows:
    $x_{i, p_{1}, p_{2}}(t) = \chi_{i, p_{1}, p_{2}}^{(1)} + \chi_{i, p_{1}, p_{2}}^{(2)}\sin\left({\pi t}\right) + 
         \chi_{i, p_{1}, p_{2}}^{(3)}\cos\left({\pi t}\right)$
and the errors are generated as follows:
    $\epsilon_{i, q_{1}, q_{2}}(t) = \eta_{i, q_{1}, q_{2}}^{(1)}\sqrt{2}\cos\left({\pi t}\right) 
    + \eta_{i, q_{1}, q_{2}}^{(2)}\sqrt{2}\sin\left({\pi t}\right)$
for all $p_{1} = 1, \cdots, P_{1}, p_{2} = 1, \cdots, P_{2}, q_{1} = 1,\cdots, Q_{1}$ and $q_{2} = 1, \cdots, Q_{2}$. 
Moreover, we assume that $x_{i, p_{1}, p_{2}}(t)$ are observed with measurement error, i.e., $u_{i, p_{1}, p_{2}}(t) = x_{i, p_{1}, p_{2}} + \delta_{p_{1}, p_{2}}$ where $\delta_{p_{1}, p_{2}} \sim N(0, 0.6^2)$. Assume that the set of random variables $\lbrace \chi_{i, p_{1}, p_{2}}^{(l)}: l = 1, 2, 3\rbrace$ and $\lbrace \eta_{i, q_{1}, q_{2}}^{(l)}: l = 1, 2\rbrace$ is mutually independent. 
The data generating process is influenced by \cite{kim2018additive} which has been used in different concepts.
We observe the data at 81 equidistant time points in $[0, 1]$ with $t_{j} = (j - 0.5)/J$ for all $j = 1, \cdots, J$. 
We also fix $P_{1} \times P_{2} = 5\times 2$ and $Q_{1}\times Q_{2}$ be either $5\times 2$ or $15\times 12$.
Set, number of subjects, $N \in \{ 30, 100\}$. 
We consider the following scenarios:
\begin{itemize}[align = left]
    \item Situation-1: We choose $\chi_{i, p_{1}, p_{2}}^{(1)} \sim N(0, 1^{2})$, $\chi_{i, p_{1}, p_{2}}^{(2)} \sim N(0, 0.85^{2})$, $\chi_{i, p_{1}, p_{2}}^{(3)} \sim N(0, 0.7^{2})$ and they are mutually independent. $\eta_{i, q_{1}, q_{2}}^{(1)} \sim N(0, 2^{2})$, $\eta_{i, q_{1}, q_{2}}^{(2)} \sim N(0, 0.75^{2})$ and they are mutually independent. Here, the covariates do not depend on the modes of the data structure. 
    \item Situation-2: In addition with the assumption of the coefficients of covariates, impose the spatial correlation structure to address the mode-wise dependencies. We consider the following two cases. 
    (a) 
$\chi^{(l)}_{i, p_{1}, p_{2}}$ at mode $(p_{1}, p_{2})$ is 
$\rho_{s}(\text{ED}_{p_{1}, p_{2}}; \theta)$, 
where $\rho_{s}$ is the exponential correlation function, $\text{ED}_{p_{1}, p_{2}}$ is defined as scaled Euclidean distance between two modes, having been scaled by a constant $\theta$, 
therefore, $\theta$ defines an isotropic covariance function. In this simulation setup, $\theta$ is taken as 8. 
(b) 
$\chi^{(l)}_{i, p_{1}, p_{2}}$ at mode $(p_{1}, p_{2})$ is 
$\rho_{M}(d_{p_{1}, p_{2}}; \kappa, \nu)$, where $d_{q_{1}, q_{2}}$ denotes the Euclidean distance between two different modes and $\rho_{M}$ is the correlation function, belongs to  Mat\'ern family. The Mat\'ern isotropic auto-correlation function has a specific form 
$\rho_{M}(d; \kappa, \nu) = \frac{2^{1-\nu}}{\Gamma(\nu)}\left( \frac{2d\sqrt{\nu}}{\kappa}\right)^{\nu}K_{\nu}\left(\frac{2d\sqrt{\nu}}{\kappa}\right)$,
for $\kappa, \nu > 0$. Here, $K_{\nu}(\cdot)$ is termed as Bessel function of order $\nu$. 
The positive range parameter $\kappa$ controls the decay of the correlation between the observations at a large distance $d$. 
The order $\nu$ controls the behavior of autocorrelation function for the observations which are separated by small distance. 
For our numerical example, we set scale $\kappa = 0.55$ and the smoothness parameter $\nu = 1$. This was implemented using ``\texttt{stationary.image.cov}'' and ``\texttt{matern.image.cov}''
functions respectively available in \texttt{fields} package in R \cite{fields}. 
\end{itemize}
\par
We ran the simulation 100 times for each scenario to evaluate our method. 
For each of the simulation setups, we set the number of knots as $[J/4]$, where $[a]$ denotes the integer part of $a$. We compare the overall performance of the models to estimate the parameter curves for different choices of ranks by studying several error rates based on different norms. 
We choose smoothing parameters $\theta$ from the set $\{0, 0.001, 0.005, 0.01, 0.05, 0.1\}$, 
and $\phi$ from the set $\{0, 0.5, 3, 10\}$, and allow values from 1 to 5 for the choice of rank $R$. In the following tables, we denote the proposed functional tensor-on-tensor model with rank $r$ as $\text{FToTM}_{r}$.
To compare with the existing literature, we apply the concurrent linear model \cite{ramsay2005springer} (CLM) for mode-wise analysis and implement this method using the ``\texttt{pffr}" function available in the \texttt{refund} \cite{refund} package in R, with the penalized concurrent effect of functional covariates \cite{ivanescu2015penalized}. 
\par
Tables \ref{table:IMSE-independent}, \ref{table:IMSE-X-dep-stationary} and \ref{table:X-dep-Matern}
show the results of integrated and relative integrated  mean square errors which are defined as 
$\text{IMSE} = \int_{t \in \mathcal{T}} \|\hat{\bbeta}(t) - \bbeta(t) \|_{\sF}^{2}dt$ 
and $\text{RIMSE} = \frac{\int_{t \in \mathcal{T}} \|\hat{\bbeta}(t) - \bbeta(t) \|_{\sF}^{2}dt}{\int_{t \in \mathcal{T}} \|\bbeta(t) \|_{\sF}^{2}dt}$, respectively. 
Similarly, we report the absolute integrated and relative integrated mean square errors which are 
$\text{IMAE} = \int_{t \in \mathcal{T}} \sum_{{p_{1}, p_{2}, q_{1}, q_{2}}}\left|\hat{\beta}_{p_{1}, p_{2}, q_{1}, q_{2}}(t) - \beta_{p_{1}, p_{2}, q_{1}, q_{2}}(t) \right|dt$ 
and \newline $\text{RIMAE} = \frac{\int_{t \in \mathcal{T}} \sum_{{p_{1}, p_{2}, q_{1}, q_{2}}}\left|\hat{\beta}_{p_{1}, p_{2}, q_{1}, q_{2}}(t) - \beta_{p_{1}, p_{2}, q_{1}, q_{2}}(t) \right|dt}{\int_{t \in \mathcal{T}} \sum_{{p_{1}, p_{2}, q_{1}, q_{2}}} \left|\beta_{p_{1}, p_{2}, q_{1}, q_{2}}(t) \right|dt}$, respectively. 
The advantage of these simulation situations are that these models are not based on the reduced-rank model. 
Here, we observe the curves in the presence of errors.
All integrals are approximated using the Riemann sum. 
Since our proposed method involves an iterative procedure which depends on the initial estimates, the computational time is therefore not comparable to that of the classical CLM, which is not an iterative method.
For all situations, our proposed method does a much better job in terms of low error rates in estimating the parameter $\bbeta(t)$. 

\begin{table*}[!t]
\centering
\caption{Results of simulation situation-1 where each modes are assumed to be independent for $\sfX(t)$ and $\sfE(t)$ for fixed time-points. 
    Here we assume each of $\{\chi_{p_{1}, p_{2}}^{(k)}\}_{p_{1}, p_{2}}$
    and 
    $\{\eta_{q_{1}, q_{2}}\}_{q_{1}, q_{2}}^{(k)}$ are independent 
    for $(p_{1}, p_{2})$ and $(q_{1}, q_{2})$ respectively.\label{table:IMSE-independent}}
\begin{tabular}{@{}ccccccc@{}}
\hline
Method & IMSE (SD) & RIMSE (SD) & IMAE (SD) & RIMAE (SD)\\
\hline
    \multicolumn{5}{c}{$N = 30, P_{1} \times P_{2} = 5 \times 2, Q_{1} \times Q_{2} = 5 \times 2$}\\
    \hline
    CLM & 0.14294 (0.02046) 
            & 0.01059 (0.00152) 
            & 0.28311 (0.02027) 
            & 0.09244 (0.00662) \\ 
    $\text{FToTM}_{1}$ & 1.48469 (0.05628) 
                    & 0.10998 (0.00417)
                    & 0.96636 (0.01626) 
                    & 0.31552  (0.00531) \\ 
    $\text{FToTM}_{2}$ & 0.45773 (0.02218) 
                    & 0.03391 (0.00164) 
                    & 0.53786 (0.01068) 
                    & 0.17561 (0.00349) \\ 
    $\text{FToTM}_{3}$ & 0.15078 (0.01316) 
                    & 0.01117 (0.00097) 
                    & 0.29482 (0.01452) 
                    & 0.09626 (0.00474) \\ 
    $\text{FToTM}_{4}$ & 0.01065 (0.00383)
                    & 0.00079 (0.00028) 
                    & 0.07871 (0.01367) 
                    & 0.0257 (0.00446) \\ 
    $\text{FToTM}_{5}$ & 0.01558 (0.00582) 
                    & 0.00115 (0.00043) 
                    & 0.09412 (0.01695) 
                    & 0.03073 (0.00553) \\ 
    \hline
    \multicolumn{5}{c}{$N = 30, P_{1} \times P_{2} = 5 \times 2, Q_{1} \times Q_{2} = 15 \times 12$}\\     
    \hline
    CLM & 0.1448  (0.01339) 
        & 0.00193 (0.00018) 
        & 0.28468 (0.0132) 
        & 0.04054 (0.00188) \\ 
    $\text{FToTM}_{1}$ & 9.24824 (0.06732) 
                     & 0.12304 (9e-04) 
                     & 2.27313 (0.01304) 
                     & 0.32372 (0.00186) \\ 
    $\text{FToTM}_{2}$ & 1.79804 (0.06786) 
                     & 0.02392 (9e-04) 
                     & 1.02121 (0.01836) 
                     & 0.14543 (0.00261) \\ 
    $\text{FToTM}_{3}$ & 0.23289 (0.02089) 
                    & 0.0031 (0.00028) 
                    & 0.36104 (0.01293) 
                    & 0.05142 (0.00184) \\ 
    $\text{FToTM}_{4}$ & 0.06108 (0.06808) 
                    & 0.00081 (0.00091) 
                    & 0.15243 (0.13744) 
                    & 0.02171 (0.01957) \\ 
    $\text{FToTM}_{5}$ & 0.00195 (0.00053) 
                    & 3e-05 (1e-05) 
                    & 0.03348 (0.00451) 
                    & 0.00477 (0.00064) \\ 
    \hline
    \multicolumn{5}{c}{$N = 100, P_{1} \times P_{2} = 5 \times 2, Q_{1} \times Q_{2} = 5 \times 2$}\\  
    \hline
    CLM & 0.03087 (0.00348) 
        & 0.00229 (0.00026) 
        & 0.13236 (0.00731) 
        & 0.04322 (0.00239) \\ 
    $\text{FToTM}_{1}$ & 1.46268 (0.04068) 
                    & 0.10835 (0.00301) 
                    & 0.95921 (0.01095) 
                    & 0.31319 (0.00358) \\ 
    $\text{FToTM}_{2}$ & 0.43737 (0.01418) 
                    & 0.0324 (0.00105) 
                    & 0.52551 (0.00725) 
                    & 0.17158 (0.00237) \\ 
    $\text{FToTM}_{3}$ & 0.13651 (0.00541) 
                    & 0.01011 (4e-04) 
                    & 0.27253 (0.01099) 
                    & 0.08898 (0.00359) \\ 
    $\text{FToTM}_{4}$ & 0.00303 (0.00091) 
                    & 0.00022 (7e-05)
                    & 0.04222 (0.00632) 
                    & 0.01379 (0.00206) \\ 
    $\text{FToTM}_{5}$ & 0.0037 (0.00115) 
                    & 0.00027 (8e-05) 
                    & 0.04663 (0.00696) 
                    & 0.01523 (0.00227) \\ 
    \hline
    \multicolumn{5}{c}{$N = 100, P_{1} \times P_{2} = 5 \times 2, Q_{1} \times Q_{2} = 15 \times 12$}\\  
    \hline
    CLM & 0.03082 (0.00163) 
        & 0.00041 (2e-05) 
        & 0.1328 (0.00357) 
        & 0.01891 (0.00051) \\ 
    $\text{FToTM}_{1}$ & 9.21298 (0.04487) 
                    & 0.12257 (6e-04) 
                    & 2.26689 (0.01132) 
                    & 0.32283 (0.00161) \\ 
    $\text{FToTM}_{2}$ & 1.76018 (0.04482) 
                    & 0.02342 (6e-04) 
                    & 1.00917 (0.01218) 
                    & 0.14372 (0.00173) \\ 
    $\text{FToTM}_{3}$ & 0.22276 (0.03467) 
                    & 0.00296 (0.00046) 
                    & 0.35168 (0.02647) 
                    & 0.05008 (0.00377) \\ 
    $\text{FToTM}_{4}$ & 0.05837 (0.06468) 
                    & 0.00078 (0.00086) 
                    & 0.14918 (0.14726) 
                    & 0.02124 (0.02097) \\ 
    $\text{FToTM}_{5}$ & 0.00085 (0.00033) 
                    & 1e-05 (0) 
                    & 0.02197 (0.00403) 
                    & 0.00313 (0.00057) \\                 
\hline
\end{tabular}
\end{table*}

\begin{table*}[!t]
\centering
\caption{
Results of simulation situation-2(a) 
where each modes are assumed to be independent for $\sfE(t)$ for fixed time-points whereas modes for $\sfX(t)$ are assumed to be dependent. 
Here we assume $\{\chi_{p_{1}, p_{2}}^{(k)}\}_{p_{1}, p_{2}}$ is spatially dependent with exponential covariance function. 
\label{table:IMSE-X-dep-stationary}}
\begin{tabular}{@{}ccccccc@{}}
\hline
Method & IMSE (SD) & RIMSE (SD) & IMAE (SD) & RIMAE (SD)\\
\hline
    \multicolumn{5}{c}{$N = 30, P_{1} \times P_{2} = 5 \times 2, Q_{1} \times Q_{2} = 5 \times 2$}\\
    \hline
    CLM & 11.03513 (2.27364) 
        & 0.81742 (0.16842) 
        & 2.45079 (0.24082) 
        & 0.8002 (0.07863) \\ 
    $\text{FToTM}_{1}$ & 1.46631  (0.0141) 
                        & 0.10862 (0.00104) 
                        & 0.96402 (0.00583) 
                        & 0.31476 (0.0019) \\ 
    $\text{FToTM}_{2}$ & 0.60273 (0.01917) 
                        & 0.04465 (0.00142) 
                        & 0.60152 (0.01318) 
                        & 0.1964 (0.0043) \\ 
    $\text{FToTM}_{3}$ & 0.32753 (0.01741) 
                        & 0.02426 (0.00129) 
                        & 0.42707 (0.01962) 
                        & 0.13944 (0.00641) \\ 
    $\text{FToTM}_{4}$ & 0.21328 (0.21078) 
                        & 0.0158 (0.01561) 
                        & 0.35394 (0.13306) 
                        & 0.11556 (0.04344) \\ 
    $\text{FToTM}_{5}$ & 0.13694 (0.02654) 
                        & 0.01014 (0.00197) 
                        & 0.30854 (0.0384) 
                        & 0.10074 (0.01254) \\ 
    \hline      
    \multicolumn{5}{c}{$N = 30, P_{1} \times P_{2} = 5 \times 2, Q_{1} \times Q_{2} = 15 \times 12$}\\
    \hline
    CLM & 11.36335 (1.34533) 
        & 0.15118 (0.0179) 
        & 2.49712 (0.14845) 
        & 0.35562 (0.02114) \\ 
    $\text{FToTM}_{1}$ & 9.21977 (0.02778) 
                        & 0.12266 (0.00037) 
                        & 2.27091 (0.0079) 
                        & 0.32341 (0.00112) \\ 
    $\text{FToTM}_{2}$ & 1.76995 (0.02734) 
                        & 0.02355 (0.00036) 
                        & 1.01769 (0.01081) 
                        & 0.14493 (0.00154) \\ 
    $\text{FToTM}_{3}$ & 0.41264 (0.16057) 
                        & 0.00549 (0.00214) 
                        & 0.48365 (0.08798) 
                        & 0.06888 (0.01253) \\ 
    $\text{FToTM}_{4}$ & 0.18218 (0.21293) 
                        & 0.00242 (0.00283) 
                        & 0.32063 (0.13906) 
                        & 0.04566 (0.0198) \\ 
    $\text{FToTM}_{5}$ & 0.06936 (0.05182) 
                        & 0.00092 (0.00069) 
                        & 0.19811 (0.08864) 
                        & 0.02821 (0.01262) \\              
    \hline
    \multicolumn{5}{c}{$N = 100, P_{1} \times P_{2} = 5 \times 2, Q_{1} \times Q_{2} = 5 \times 2$}\\
    \hline                
    CLM & 2.55232 (0.45649) 
        & 0.18906 (0.03381) 
        & 1.19172 (0.10708) 
        & 0.38911 (0.03496) \\ 
    $\text{FToTM}_{1}$ & 1.45974 (0.00711) 
                      & 0.10813 (0.00053) 
                      & 0.96178 (0.00323) 
                      & 0.31403 (0.00105) \\ 
    $\text{FToTM}_{2}$ & 0.58776 (0.01049) 
                      & 0.04354 (0.00078) 
                      & 0.59246 (0.00766) 
                      & 0.19344 (0.0025) \\ 
    $\text{FToTM}_{3}$  & 0.31275 (0.00961) 
                    & 0.02317 (0.00071) 
                    & 0.411 (0.01063) 
                    & 0.1342 (0.00347) \\ 
    $\text{FToTM}_{4}$ & 0.18492 (0.20235) 
                    & 0.0137 (0.01499) 
                    & 0.32604 (0.13409)
                    & 0.10646 (0.04378) \\ 
    $\text{FToTM}_{5}$ & 0.11149 (0.03648) 
                    & 0.00826 (0.0027) 
                    & 0.27665 (0.06128) 
                    & 0.09033 (0.02001) \\          
    \hline
    \multicolumn{5}{c}{$N = 100, P_{1} \times P_{2} = 5 \times 2, Q_{1} \times Q_{2} = 15 \times 12$}\\
    \hline                         
    CLM & 2.5259 (0.21061) 
        & 0.0336 (0.0028) 
        & 1.18808 (0.05122) 
        & 0.1692 (0.00729) \\ 
    $\text{FToTM}_{1}$ & 9.26995 (0.13929) 
                    & 0.12333 (0.00185) 
                    & 2.28525 (0.03385) 
                    & 0.32545 (0.00482) \\ 
    $\text{FToTM}_{2}$ & 1.74798 (0.01575) 
                    & 0.02325 (0.00021) 
                    & 1.00948 (0.00691) 
                    & 0.14376 (0.00098) \\ 
    $\text{FToTM}_{3}$ & 0.61359 (0.30173) 
                    & 0.00816 (0.00401) 
                    & 0.58812 (0.16308) 
                    & 0.08376 (0.02322) \\ 
    $\text{FToTM}_{4}$ & 0.66733 (0.41716) 
                    & 0.00888 (0.00555) 
                    & 0.596 (0.24026) 
                    & 0.08488 (0.03422) \\ 
    $\text{FToTM}_{5}$ & 0.0914 (0.04684) 
                    & 0.00122 (0.00062) 
                    & 0.23906 (0.07987) 
                    & 0.03405 (0.01137) \\                 
\hline
\end{tabular}
\end{table*}

\begin{table*}[!t]
\centering
\caption{Results of simulation situation-2(b) 
where each modes are assumed to be independent for $\sfE(t)$ for fixed time-points whereas modes for $\sfX(t)$ are assumed to be dependent. 
Here we assume $\{\chi_{p_{1}, p_{2}}^{(k)}\}_{p_{1}, p_{2}}$ is spatially dependent with Mat\'ern covariance function. \label{table:X-dep-Matern}}
\begin{tabular}{@{}ccccccc@{}}
\hline
Method & IMSE (SD) & RIMSE (SD) & IMAE (SD) & RIMAE (SD)\\
\hline
\multicolumn{5}{c}{$N = 30, P_{1} \times P_{2} = 5 \times 2, Q_{1} \times Q_{2} = 5 \times 2$}\\
\hline
    CLM & 0.26393 (0.04919) 
        & 0.01955 (0.00364) 
        & 0.38374 (0.03318) 
        & 0.12529 (0.01083) \\ 
  $\text{FToTM}_{1}$ & 1.45885 (0.02061) 
                    & 0.10806  (0.00153) 
                    & 0.9599  (0.00731) 
                    & 0.31342 (0.00239)\\ 
  $\text{FToTM}_{2}$ & 0.46879 (0.02445) 
                    & 0.03473 (0.00181) 
                    & 0.54097 (0.01118) 
                    & 0.17663 (0.00365) \\ 
  $\text{FToTM}_{3}$ & 0.16291 (0.01629) 
                    & 0.01207 (0.00121) 
                    & 0.30998 (0.01506) 
                    & 0.10121 (0.00492) \\ 
  $\text{FToTM}_{4}$ & 0.0087 (0.01146) 
                    & 0.00064 (0.00085) 
                    & 0.06782 (0.0274) 
                    & 0.02214 (0.00895) \\ 
  $\text{FToTM}_{5}$ & 0.0111 (0.00525) 
                    & 0.00082 (0.00039) 
                    & 0.07909 (0.01855) 
                    & 0.02582 (0.00606) \\ 
    \hline
    \multicolumn{5}{c}{$N = 30, P_{1} \times P_{2} = 5 \times 2, Q_{1} \times Q_{2} = 15\times 12$}\\
    \hline
    CLM & 0.26313 (0.02958)
        & 0.0035 (0.00039) 
        & 0.3835 (0.02167) 
        & 0.05462 (0.00309) \\ 
  $\text{FToTM}_{1}$ & 9.22145  (0.02791) 
                    & 0.12268 (0.00037) 
                    & 2.27063 (0.00894) 
                    & 0.32337 (0.00127) \\ 
  $\text{FToTM}_{2}$ & 1.77848 (0.02878) 
                    & 0.02366 (0.00038) 
                    & 1.02026 (0.01052) 
                    & 0.1453 (0.0015) \\ 
  $\text{FToTM}_{3}$ & 0.23293 (0.01206) 
                    & 0.0031 (0.00016) 
                    & 0.36047 (0.00952) 
                    & 0.05134 (0.00136) \\ 
  $\text{FToTM}_{4}$ & 0.05929 (0.06315) 
                    & 0.00079 (0.00084) 
                    & 0.15872 (0.13817) 
                    & 0.0226 (0.01968) \\ 
  $\text{FToTM}_{5}$ & 0.00175 (0.00133) 
                    & 2e-05 (2e-05) 
                    & 0.03081 (0.00931) 
                    & 0.00439 (0.00133) \\ 
    \hline
    \multicolumn{5}{c}{$N = 100, P_{1} \times P_{2} = 5 \times 2, Q_{1} \times Q_{2} = 5\times 2$}\\
    \hline       
    CLM & 0.05833 (0.00912) 
        & 0.00432 (0.00068) 
        & 0.18217 (0.01463) 
        & 0.05948 (0.00478) \\ 
  $\text{FToTM}_{1}$ & 1.44275 (0.00963) 
                    & 0.10687 (0.00071) 
                    & 0.95499 (0.00374) 
                    & 0.31181 (0.00122) \\ 
  $\text{FToTM}_{2}$ & 0.44676 (0.01346) 
                    & 0.03309 (0.001) 
                    & 0.52798 (0.00559) 
                    & 0.17239 (0.00183) \\ 
  $\text{FToTM}_{3}$ & 0.14999 (0.00779) 
                    & 0.01111 (0.00058) 
                    & 0.29657 (0.00869) 
                    & 0.09683 (0.00284) \\ 
  $\text{FToTM}_{4}$ & 0.00231 (0.00143) 
                    & 0.00017 (0.00011) 
                    & 0.03593 (0.00956) 
                    & 0.01173 (0.00312) \\ 
  $\text{FToTM}_{5}$ & 0.00284 (0.00125) 
                    & 0.00021 (9e-05) 
                    & 0.04026 (0.00816) 
                    & 0.01314 (0.00266) \\ 
    \hline    
    \multicolumn{5}{c}{$N = 100, P_{1} \times P_{2} = 5 \times 2, Q_{1} \times Q_{2} = 15\times 12$}\\
    \hline
    CLM & 0.05746 (0.00427) 
        & 0.00076 (6e-05) 
        & 0.18093 (0.00695) 
        & 0.02577 (0.00099) \\ 
  $\text{FToTM}_{1}$ & 9.18754 (0.00744) 
                    & 0.12223 (1e-04) 
                    & 2.26337 (0.00385) 
                    & 0.32233 (0.00055) \\ 
  $\text{FToTM}_{2}$ & 1.73663 (0.00773) 
                    & 0.0231 (1e-04) 
                    & 1.00481 (0.0038) 
                    & 0.1431 (0.00054) \\ 
  $\text{FToTM}_{3}$ & 0.2181 (0.00522) 
                    & 0.0029  (7e-05) 
                    & 0.34535 (0.00306) 
                    & 0.04918 (0.00044) \\ 
  $\text{FToTM}_{4}$ & 0.05167 (0.05987) 
                    & 0.00069 (8e-04) 
                    & 0.13999 (0.14339) 
                    & 0.01994 (0.02042) \\ 
  $\text{FToTM}_{5}$ & 0.00081 (0.00061) 
                    & 1e-05 (1e-05)
                    & 0.02055 (0.00654) 
                    & 0.00293 (0.00093) \\ 
\hline
\end{tabular}
\end{table*}

\section{Application to \textit{ForrestGump} data set}
\label{sec:real-data}
The Studyforrest (website: \url{https://www.studyforrest.org/}) describes a publicly available dataset for the study of neural language and story processing. The imaging data analyzed in this paper is publicly available through OpenfMRI (\url{https://openneuro.org/datasets/ds000113/versions/1.3.0}) \cite{hanke2014high, sengupta2016studyforrest}. 
In total 15 right-handed participants (mean age 29.4 years, range 21–39, 40\% females, native German speaker) 
volunteered for a series of studies including  eye-tracking experiments using natural signal stimulation with a motion picture.
Volunteers have no known hearing problem without permanent or current temporary impairments and no neurological disorder.
Participants viewed a feature film ``Forrest Gump" (Robert Zemeckis, Paramount Pictures, 1994 with German audio track) in eight back-to-back 15 minute long movie sessions. 
The eye tracking camera was fitted just outside the scanner bore, approximately centered, and viewing the left eye of the participant at a distance of 100 cm through a small gap between the top of the back projection screen and the scanner bore ceiling.
Participants were allowed to perform free eye movements without requiring to fixate or keep the eye open. 
The eye gaze recording started as soon as the computer received the first fMRI trigger signal. 
\par
The normalized eye-gaze coordinate time series contain the X and Y coordinates of the eye-gaze, pupil area measurements, and the corresponding numerical ID of the movie frame presented at the time of measurement are obtained. 
In the eye-gazing data, there is significant loss of information due to eye blinks, and those are marked as NaN in the data set and imputed via spline interpolation. 
We use 14 individuals and remove Subject 5 due to excessive missing data.
To analyze the data on a local computer, we only used the first run of the experiment for each individual and down-sampled the images to $64\times 64 \times 64$ via nearest-neighbor interpolation where the number of time-points was 451 (first one-eighth of the movie). 
Details of the pre-processing steps performed along with further information of data acquisitions are described in Appendix.
\par
Our scientific question of interest was to understand the association between brain image pattern in the presence of audio-visual inputs. 
This is the first approach to statistically analyze such a study by exploiting the complex structure of the data. 
We use the eye position in an angular unit (i.e., polar coordinates) instead of the Cartesian coordinates, where we report magnitude changes of eye position in the screen reference system. 
Furthermore, the X and Y coordinates, the related polar coordinates, and the pupil area were down-sampled to match the fMRI sampling frequency. 
We fit a time-varying tensor regression coefficient model as described in Section \ref{sec:general}. 
Our covariate is a 3-mode tensor representing normalized eye-gaze coordinate time-series; each  mode represents scaled polar coordinates of the eye-gaze and pupil area measurements, respectively. The response of the model is pre-processed fMRI data. Response and covariates are collected simultaneously.
The coefficient functions $\bbeta_{1}$, $\bbeta_{2}$ and $\bbeta_{3}$ are amplitudes over the time associated with distance, angle of eye-gaze and pupil area, respectively; included to detect the effect of movie in a visual form in BOLD response change. 
We choose the rank for reduced-rank extraction to be 3 since it has the lowest prediction error.
\par
For interpretation purposes, we evaluate estimates
$\hat{\bbeta}(t)$ by taking average values over eight different functional networks in the brain. This was achieved by first parcellating the brain into the 268 regions of the Shen atlas \cite{shen2013groupwise}. These regions were thereafter further combined into eight functional networks 
\cite{finn2015functional}:
medial frontal, frontoparietal, default mode, subcortical-cerebellum, motor, visual I, visual II, and visual association.
Figure \ref{fig:coef} 
represents the average estimated coefficient function corresponding to three
visual features (distance, angle of eye-gaze, and pupil area) over all the time-points for each network, respectively. Throughout the time course changes in visual features has greatest impact on activation in visual I, depicted using purple lines, which should be expected as participants view the movie. 
Vertical lines represent scene changes in the movie. The first segment, consisting of approximately 84 time-points corresponds to the opening sequence, which shows a feather floating through the sky as credits are shown. The second segment consists of the famous scene where the protagonist of the movie sits on a bench at a bus stop and begins discussing the story of his life. During this scene, there is heightened activation in several brain networks in reaction to  different visual features. Subsequent segments represent scene changes alternating between interior and exterior settings; see 
\cite{hausler2016annotation}
for more details.

\section{Discussion}
\label{sec:discussion}
In this paper, we have proposed a time-varying tensor-on-tensor regression model and a method to estimate the coefficient tensors which belong to an infinite-dimensional space. We believe the method provides an efficient approach towards performing multi-modal data analysis using neuroimaging data.
Regression coefficients are expressed using the B-spline technique, and the coefficients of the B-spline bases are estimated using low-rank  tensor decomposition. 
This method reduces the vastness of the parameters of interest and computational complexity. 
We have provided a meaningful simulation study, as well as performed real data analysis combining fMRI and eye-tracking data. The results of our data analysis suggests the approach has promise for identifying brain regions responding to an external stimulus, which in this case is movie watching.\par
Although our tensor data can be compactly represented by a CP model, it is NP hard to determine the rank of the low-rank decomposition
\cite{johan1990tensor}. 
To determine the tuning parameters, one can perform the cross-validation technique. 
However, our main objective is not to choose the optimal rank of the low-rank decomposition in the algorithm, and we leave this for future research.  
Furthermore, the tensor train representation \cite{liu2020low} could be an alternative representation of the multidimensional array. 
In conclusion, our work provides an important direction for dealing with massive structured data as time-varying tensors for analysis in multi-modal neuroimaging studies. 

\section*{Acknowledgments}
The research of Dr. Lindquist is supported in part by NIH grants R01 EB016061 and R01 EB026549 from the National Institute of Biomedical Imaging and Bioengineering.
The research of Dr. Maiti is partially supported by the National Science Foundation grants NSF DMS-1952856 and 1924724.



\bibliographystyle{IEEEtran}
\bibliography{master-reference}

\appendices
\section*{Proof of Theorems 1 and 2}
In this section, we provide all technical details of materials described in Section 4. 
Our development is constructed upon and extended the previous work of  \cite{Lock2018,guhaniyogi2017bayesian,Xue2018} for different contexts and asymptotics are similar to reduced rank regression model (for example \cite{chen2013reduced}).
The lines of proof follow from \cite{newey1997convergence, bunea2011optimal, chen2013reduced}. 
\subsection{Technical lemmas}
\begin{lemma}
\label{lemma:trace}
For positive definite matrices $\bA$ and $\bB$ we have
\begin{equation}
    \lambda_{\min}(\bA) \trace\{\bB\}
    \leq 
    \trace\{\bA\bB\} \leq \lambda_{\max}(\bA)\trace\{\bB\}
\end{equation}
where $\lambda_{\max}(\bA)$ and $\bA$ and $\lambda_{\min}(\bA)$ are the largest eigenvalue and the smallest eigenvalues of $\bA$ respectively. 
\end{lemma}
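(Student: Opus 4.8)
The plan is to diagonalize one of the two matrices and reduce the trace to a weighted sum of Rayleigh quotients of the other. I would work with the spectral decomposition of $\bB$, since the weights it produces are automatically nonnegative; this is where positive definiteness of $\bB$ is used (positive semidefiniteness would in fact suffice). Write $\bB = \bU \bLambda \bU^{\tp}$ with $\bU$ orthogonal and $\bLambda = \diag(\lambda_{1}(\bB), \dots, \lambda_{n}(\bB))$, all $\lambda_{i}(\bB) \geq 0$.

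Next I would use the cyclic property of the trace to obtain $\trace\{\bA\bB\} = \trace\{\bU^{\tp}\bA\bU\,\bLambda\} = \sum_{i} \lambda_{i}(\bB)\,(\bU^{\tp}\bA\bU)_{ii}$. The $i$-th diagonal entry of $\bU^{\tp}\bA\bU$ equals $\bu_{i}^{\tp}\bA\bu_{i}$ where $\bu_{i}$ is the $i$-th column of $\bU$, hence a unit vector; by the variational characterization of eigenvalues of the positive definite matrix $\bA$ this satisfies $\lambda_{\min}(\bA) \leq \bu_{i}^{\tp}\bA\bu_{i} \leq \lambda_{\max}(\bA)$. Multiplying these inequalities by the nonnegative scalars $\lambda_{i}(\bB)$ and summing over $i$ gives $\lambda_{\min}(\bA)\sum_{i}\lambda_{i}(\bB) \leq \trace\{\bA\bB\} \leq \lambda_{\max}(\bA)\sum_{i}\lambda_{i}(\bB)$, and since $\sum_{i}\lambda_{i}(\bB) = \trace\{\bLambda\} = \trace\{\bB\}$, the claim follows. (An equivalent route, which I would mention only as an alternative, is to write $\bB = \bB^{1/2}\bB^{1/2}$, use $\trace\{\bA\bB\} = \sum_{i}\bv_{i}^{\tp}\bA\bv_{i}$ with $\bv_{i} = \bB^{1/2}\be_{i}$, and note $\sum_{i}\|\bv_{i}\|^{2} = \trace\{\bB\}$.)

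There is essentially no serious obstacle here: the only point requiring care is the sign of the weights, i.e. that the multiplication step $\lambda_{\min}(\bA) \leq \bu_{i}^{\tp}\bA\bu_{i}$ by $\lambda_{i}(\bB)$ preserves the inequality, which is exactly guaranteed by $\bB \succeq 0$. The symmetry hypothesis on $\bA$ and $\bB$ (implicit in ``positive definite'') is what makes the spectral decomposition and the eigenvalue bounds available; I would state this explicitly at the outset so the later applications of the lemma in the proofs of Theorems~1 and~2 are unambiguous.
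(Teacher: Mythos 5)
Your proof is correct and complete. For comparison: the paper does not actually prove this lemma at all --- its ``proof'' is a one-line citation to \cite{fang1994inequalities} --- so your spectral-decomposition argument supplies what the paper omits. The route you take (diagonalize $\bB$, use cyclicity of the trace to write $\trace\{\bA\bB\}=\sum_i \lambda_i(\bB)\,\bu_i^{\tp}\bA\bu_i$, and bound each Rayleigh quotient of $\bA$ between $\lambda_{\min}(\bA)$ and $\lambda_{\max}(\bA)$) is the standard one, and you correctly isolate the only delicate point, namely that the weights $\lambda_i(\bB)$ must be nonnegative for the term-by-term inequalities to survive summation; as you note, positive semidefiniteness of $\bB$ suffices and positive definiteness of $\bB$ is not needed. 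One caution regarding your closing remark about the downstream use of the lemma: in the proof of Theorem~1 the paper applies the upper bound with $\bB=\vect(\sfC)\vect(\sfC)^{\tp}-\vect(\widehat{\sfB}_{0})\vect(\widehat{\sfB}_{0})^{\tp}$, which is symmetric but generally indefinite, and for indefinite $\bB$ the inequality $\trace\{\bA\bB\}\leq\lambda_{\max}(\bA)\trace\{\bB\}$ can fail (the negative eigenvalues of $\bB$ reverse the corresponding term-wise bounds). That is a gap in the paper's application of the lemma, not in your proof of the lemma as stated.
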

\begin{proof}
See \cite{fang1994inequalities} for detailed proof. \hfill$\square$
\end{proof} 
\par
Before introducing the next lemma, let us define
a $P$-dimensional vector $\bu = (u_{1}, \cdots u_{P})^{\tp}$ which is sub-Gaussian with some parameters $\sigma$; then, for all $\balpha \in \bbR^{P}$,
\begin{equation}
    \E\{ \exp{\balpha^{\tp}\bu} \} \leq \exp(\|\balpha\|^{2}\sigma^{2}/2)
\end{equation}
Define the locally stationary time series $u_{j} = \sG(j/J, \sF_{j})$ where $\sF_{j} = (\cdots, \eta_{j-1}, \eta_{j}, \cdots)$; $\eta_{j}$s are i.i.d. random variables, and $\sG: [0,1]\times\bbR^{\infty} \rightarrow \bR$ is a measurable function such that $\xi_{j}(t) = \sG(t, \sF_{j})$.
Let $\{ \eta'\}$ be i.i.d. copies of $\eta$ and assume that for some $a > 0$, define the $L_{a}$-norm $\|\eta\|_{a} = \left\{\E|\eta|^{a}\right\}^{1/a}$. 
Then for $k \geq 0$ define the physical dependence measure $\Delta(k, a) = \sup_{t \in [0,1]}\max_{j}\|\sG(t, \sF_{j}) - \sG(t, \sF_{j, k})\|_{a}$
where $\sF_{j,k} = (\sF_{j-k-1}, \eta'_{j-k}, \eta_{j-k+1}, \cdots, \eta_{j})$. Moreover, recall the condition (A4) where for some large $a, \kappa_{0} > 0$, there exists a universal constant $C > 0$ such that $\Delta(k, a) \leq Ck^{-\kappa_{0}}$ for $k \geq 1$. 
Furthermore, let $\|\eta\|_{a}$ be finite for some $a > 1$. 
\begin{lemma}
\label{lemma:sigma}
Under condition (A4), and due to the above explanations, 
for some constant $C_{a} > 0$, 
\begin{equation}
    \bbP\left\{\frac{1}{NJ}\sigma_{1}(\sP\bE) \leq \frac{Q\xi N^{\tau}}{\sqrt{J}}\right\} \geq 1-C_{a}N^{-a\tau}
\end{equation}
where $\tau$ is some small positive real number and $\xi = \sup_{1\leq h\leq H}\sup_{t\in[0,1]}|\bbB_{h}(t)|$
\end{lemma}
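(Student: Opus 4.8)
The plan is to bound $\sigma_{1}(\sP\bE)$ by a Frobenius norm, reduce it to a bounded collection of scalar weighted sums, control their $a$-th moments using the physical-dependence machinery of condition~(C4), and finish with Markov's inequality. Throughout, I use that $\sP$ is built from evaluations of the B-spline basis (and, wherever they enter, from the covariates, which are bounded almost everywhere by~(C5)), so every entry of $\sP$ is at most $C\xi$ in absolute value, the universal constant being absorbed into $C_{a}$. Writing $\sigma_{1}(\sP\bE)\le\|\sP\bE\|_{\sF}$ and expanding entrywise, a generic entry of $\sP\bE$ has the form $W=\sum_{i=1}^{N}S_{i}$ with $S_{i}=\sum_{j=1}^{J}w_{ij}\,\epsilon_{i,q}(t_{ij})$ and $|w_{ij}|\le C\xi$ given the design; by~(C4) the $S_{i}$ are i.i.d.\ and mean zero across $i$, and within each trajectory $\{\epsilon_{i,q}(t_{ij})\}_{j}$ is a locally stationary sequence with polynomially decaying physical dependence measure.

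The key step is the moment bound, carried out conditionally on the design points, which are i.i.d.\ by~(C1). A Rosenthal/Burkholder-type inequality for stationary sequences expressed through the physical dependence measure --- in the style of Wu's framework and of \cite{ding2020estimation, ding2021multivariate}, already invoked for~(C4) --- together with $|w_{ij}|\le C\xi$ gives $\|S_{i}\|_{a}\le C_{a}\xi\sqrt{J}$, the $\sqrt{J}$ being the usual short-range-dependence scaling of a sum of $J$ weakly dependent terms. Applying the classical Rosenthal inequality to the sum of the $N$ i.i.d.\ mean-zero variables $S_{i}$ then yields $\|W\|_{a}\le C_{a}\xi\sqrt{NJ}$, and summing the $CH_{N}Q$ entrywise $a$-th moments gives $\mathbb{E}\|\sP\bE\|_{\sF}^{a}\le C_{a}(H_{N}Q)^{a/2}\xi^{a}(NJ)^{a/2}$.

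Finally, Markov's inequality applied with threshold $NJ\cdot\tfrac{Q\xi N^{\tau}}{\sqrt{J}}=Q\xi N^{\tau+1}\sqrt{J}$ yields
\[
\bbP\Big\{\tfrac{1}{NJ}\sigma_{1}(\sP\bE)>\tfrac{Q\xi N^{\tau}}{\sqrt{J}}\Big\}\;\le\;\frac{C_{a}(H_{N}Q)^{a/2}\xi^{a}(NJ)^{a/2}}{(Q\xi N^{\tau+1}\sqrt{J})^{a}}\;=\;C_{a}\Big(\tfrac{H_{N}}{Q}\Big)^{a/2}N^{-a\tau-a/2},
\]
which is at most $C_{a}N^{-a\tau}$ once $N$ is large, the surplus factor $N^{-a/2}$ absorbing the slowly growing dimension ratio $H_{N}/Q$; this is where ``both the number of time-points and trajectories are large enough'' is used, together with the generous slack built into the stated rate.

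I expect the dependent-sum moment bound $\|S_{i}\|_{a}\le C_{a}\xi\sqrt{J}$ to be the main obstacle. Condition~(C4) only posits $\Delta(k,a)\le Ck^{-\kappa_{0}}$ for \emph{some} $\kappa_{0}>0$, whereas the clean $\sqrt{J}$ rate needs $\sum_{k}\Delta(k,a)<\infty$, i.e.\ $\kappa_{0}>1$. When $0<\kappa_{0}\le 1$ the within-trajectory sum instead scales like a power of $J$ strictly larger than $J^{1/2}$ (up to logarithmic factors); because the target rate carries the extra $N^{\tau}$ and, in the asymptotic regime considered, $J$ does not grow too fast relative to $N$, this worse power of $J$ is still absorbed, but tracking the interplay between $\kappa_{0}$, $a$, and $\tau$ to make that rigorous is the delicate part.
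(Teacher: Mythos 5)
The paper does not actually prove this lemma: its ``proof'' is a one-line deferral to the Nagaev-type concentration results of \cite{ding2021multivariate} for locally stationary processes. Your reconstruction --- reduce $\sigma_{1}(\sP\bE)$ to a family of weighted error sums, control their $a$-th moments through Rosenthal-type inequalities under the physical dependence measure, and finish with Markov --- is precisely the standard machinery behind polynomial tails of the form $1-C_{a}N^{-a\tau}$, so in spirit you are rebuilding what the paper imports wholesale. Your closing Markov computation is internally consistent, and your observation that condition (C4) only posits $\Delta(k,a)\leq Ck^{-\kappa_{0}}$ for \emph{some} $\kappa_{0}>0$, which yields $\sum_{k}\Delta(k,a)<\infty$ only when $\kappa_{0}>1$, points at a real weakness in the paper's assumption rather than in your argument; without summability the clean $\sqrt{J}$ scaling of $\|S_{i}\|_{a}$ is simply unavailable.

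There is, however, one step that does not hold as written: the claim that every entry of $\sP$ is at most $C\xi$. Since $\sP=\sfZ_{(1)}(\sfZ_{(1)}^{\tp}\sfZ_{(1)})^{-1}\sfZ_{(1)}^{\tp}$ contains the inverse Gram matrix, a generic entry is $\bz_{k}^{\tp}(\sfZ_{(1)}^{\tp}\sfZ_{(1)})^{-1}\bz_{k'}$, of order $\xi^{2}H_{N}P/\lambda_{\min}(\sfZ_{(1)}^{\tp}\sfZ_{(1)})$; the weights genuinely bounded by $C\xi$ are the entries of $\sfZ_{(1)}$, not of $\sP$ (and $\sP\bE$ has $NJQ$ entries, not $H_{N}Q$). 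The fix preserves your architecture: factor $\sigma_{1}(\sP\bE)\leq\sigma_{1}\bigl(\sfZ_{(1)}(\sfZ_{(1)}^{\tp}\sfZ_{(1)})^{-1}\bigr)\,\sigma_{1}(\sfZ_{(1)}^{\tp}\bE)=\lambda_{\min}(\sfZ_{(1)}^{\tp}\sfZ_{(1)})^{-1/2}\,\sigma_{1}(\sfZ_{(1)}^{\tp}\bE)$ and run your entrywise moment bound on the $H_{N}P\times Q$ matrix $\sfZ_{(1)}^{\tp}\bE$, whose entries truly are $\sum_{i,j}w_{ij}\epsilon_{i,q}(t_{ij})$ with $|w_{ij}|\leq C\xi$ --- which is also the only place $\xi$ can legitimately enter the bound. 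Alternatively, the contraction $\sigma_{1}(\sP\bE)\leq\|\bE\|_{\sF}$ plus a Rosenthal bound on $\sum_{i,j,q}\epsilon_{i,q}(t_{ij})^{2}$ already clears the very generous threshold $Q\xi N^{\tau+1}\sqrt{J}$, though it loses the factor $\xi$. Either repair rescues the conclusion, but the weight bound on $\sP$ itself is a genuine gap.
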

\begin{proof}
See \cite{ding2021multivariate} and the references herein for the proof in detail. \hfill$\square$
\end{proof}

\begin{lemma}
\label{lemma:spline}
Define $\sS_{n}$ be a collection of spline such that the function $g_{\bullet}(t) = \sum_{h = 1}^{K_{N} + v + 1}b_{h, \bullet}B_{h}(t)$, 
where $\lbrace B_{h}, h = 1, \cdots,  (K_{N} + v + 1) \rbrace$ is a set of B-spline bases in $S_{n}$. Under conditions (A2) and (A3), there exists a spline function $g_{\bullet}(t) \in S_{n}$  such that 
\begin{equation}
    \sup_{t \in \sT} | \beta_{\bullet}(t) - g_{\bullet}(t)| = O\left(\frac{1}{K_{N}^{v + 1}}\right)
\end{equation}
\end{lemma}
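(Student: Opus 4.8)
The plan is to derive this from the classical theory of polynomial-spline approximation, since conditions (C2) and (C3) supply exactly the two hypotheses such results require. Condition (C3) is understood to endow the varying coefficient $\beta_\bullet$ with $v+1$ continuous (or at least bounded) derivatives on the compact interval $\sT$ --- this is the ``degree of smoothness'' alluded to in the remark following the regularity conditions. Condition (C2) fixes $K_N$ interior knots on $[0,1]$, which I would read as a quasi-uniform partition, so that $\max_{h}(\tau_{h}-\tau_{h-1})$ is of order $K_N^{-1}$. The spline space $\sS_n$ of order $v+1$ from (C3) is then spanned by the $K_N+v+1$ B-splines $\{B_h\}$.

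First I would introduce a quasi-interpolation (Schoenberg-type) operator $\mathcal{Q}:C(\sT)\to\sS_n$ that is linear, bounded with operator norm depending only on $v$, and reproduces all polynomials of degree at most $v$. Setting $g_\bullet=\mathcal{Q}\beta_\bullet$, which automatically has the B-spline representation $g_\bullet(t)=\sum_{h=1}^{K_N+v+1}b_{h,\bullet}B_h(t)$ claimed in the statement, I would bound the error locally: on each subinterval $[\tau_{h-1},\tau_h)$ expand $\beta_\bullet$ by Taylor's theorem to order $v$, use polynomial reproduction of $\mathcal{Q}$ to cancel the polynomial part, and control the remainder by $\|\beta_\bullet^{(v+1)}\|_\infty$ times the $(v+1)$-st power of the local mesh width. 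Because $\mathcal{Q}$ is local (each $B_h$ has compact support spanning $O(1)$ knots), this yields
\[
\sup_{t\in\sT}|\beta_\bullet(t)-g_\bullet(t)| \le C_v\,\|\beta_\bullet^{(v+1)}\|_\infty\,\Big(\max_h(\tau_h-\tau_{h-1})\Big)^{v+1},
\]
and inserting the mesh bound from (C2) gives the stated rate $O(K_N^{-(v+1)})$.

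There is no substantive obstacle here: the result is a textbook Jackson-type estimate for splines, and I would simply cite \cite{de1978practical} for the quasi-interpolant and its error bound rather than reconstruct the constants. The only care needed is bookkeeping --- making explicit that (C3) supplies $\beta_\bullet\in C^{v+1}(\sT)$ and that (C2) is quasi-uniform, and tracking that the implied constant $C_v$ and $\|\beta_\bullet^{(v+1)}\|_\infty$ are uniform in the index $\bullet$, which holds whenever the coefficient functions are assumed uniformly smooth across modes.
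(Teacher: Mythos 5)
Your proposal is correct and follows essentially the same route as the paper, whose entire proof of this lemma is the single line ``This proof follows from \cite{de1978practical}''; you have simply written out the standard quasi-interpolant/Jackson-type argument that this citation compresses. Your added bookkeeping---reading (C2) as quasi-uniformity of the knots and (C3) as supplying $\beta_{\bullet}\in C^{v+1}(\sT)$ uniformly in the mode index---is a fair and useful clarification of hypotheses the paper leaves implicit.
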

\begin{proof}
This proof follows from \cite{de1978practical}. \hfill$\square$
\end{proof}


\subsection{Proof of Theorem 1}
For simplicity, assume $\bY \in \bbR^{NJ \times Q}$ 
and $\sfZ \in \bbR^{NJ \times H \times P}$, thus $\sfB \in \bbR^{H \times P \times Q}$. The contracted inner product in this proof is of order 2, i.e., $<\cdot, \cdot>_{2}$ , for simplicity, we drop subscript 2 from the inner product. 
By the definition of $\widehat{\sfB}_{0}$, for all matrices $\sfC$ of rank $R_{0}$ with order $H_{N}\times P \times Q$, we have 
\begin{align*}
    &\|\bY - \left< \sfZ, \widehat{\sfB}_{0} \right>\|_{\sF}^{2} + (NJ) \|\widehat{\sfB}_{0}\|_{\sF, \bW_{\omega}}^{2}\\
    &\leq 
    \|\bY - \left<\sfZ, \sfC\right>\|_{\sF}^{2} + (NJ) \|{\sfC}\|_{\sF, \bW_{\omega}}^{2}\numberthis
\end{align*}
In addition, the following two equations hold for any tensor $\sfC$,
\begin{align*}
\label{expanssion}
    &\|\bY-\left<\sfZ, \sfC \right>\|_{\sF}^{2} \\
    &=\|\bY-\left<\sfZ, \sfB_{0}\right>\|_{\sF}^{2} 
        + \|\left< \sfZ, (\sfB_{0} - \sfC)\right>\|_{\sF}^{2}\\
    & \qquad + 2\left<\bE, \left<\sfZ, (\sfB_{0}-\sfC)\right> \right>_{\sF}\\
    &\|\bY-\left<\sfZ, \widehat{\sfB}_{0}\right>\|_{\sF}^{2}\\
    &= \|\bY-\left<\sfZ, \sfB_{0} \right>\|_{\sF}^{2} 
        + \|\left< \sfZ, (\sfB_{0} - \widehat{\sfB}_{0}) \right>\|_{\sF}^{2} \\
    & \qquad + 2\left<\bE, \left<\sfZ, (\sfB_{0}-\widehat{\sfB}_{0}) \right> \right>_{\sF}\\
    \numberthis
\end{align*}
with 
$\left<\bA, \bB \right>_{\sF}= \trace\{\bA^{\tp}\bB\}$
for any matrices $\bA$ and $\bB$ such that the matrix product of $\bA^{\tp}\bB$ is permissible.
Define, $\sP = \sfZ_{(1)}(\sfZ_{(1)}^{\tp}\sfZ_{(1)})^{-1}\sfZ_{(1)}^{\tp}$, then by the definition of Frobenius inner product,  
$\left<\bE, \left< \sfZ, (\widehat{\sfB}_{0} - \sfB) \right> \right>_{\sF} = \left<\sP\bE, \left<\sfZ, (\widehat{\sfB}_{0} - \sfC)\right>\right>_{\sF}$.
Moreover, the inner product norm $\left<\cdot, \cdot \right>_{\sF}$, operator norm $\|\cdot\|_{2} = \sigma_{1}(\cdot)$ and nuclear norm $\|\cdot\|_{*} = \sum_{i}\sigma_{i}(\cdot)$ are related using the inequalities
$\left<\bA, \bB \right>_{\sF} \leq \|\bA\|_{2}\|\bB\|_{*}$ and $\|\bB\|_{*} \leq \sqrt{r}\|\bB\|_{\sF}$ where $r$ be the rank of the matrix $\bB$ and $\sigma_{i}(\cdot)$ represents the $i^\text{th}$ largest singular value of a matrix.
By subtracting the two Equations in (\ref{expanssion}) and 
exercising the properties of different norms mentioned above,
we get the following inequalities. 
\begingroup
\allowdisplaybreaks
\begin{align*}
\label{step1}
          &\|\left<\sfZ, (\widehat{\sfB}_{0} - \sfB_{0})\right>\|_{\sF}^{2}\\
          &\leq \|\left<\sfZ, (\sfC - \sfB_{0})\right>\|_{\sF}^{2} + 2\left<\bE, \left<\sfZ, (\widehat{\sfB}_{0}-\sfC)\right> \right>_{\sF} \\
          & \qquad + (NJ)
          \left\{
          \|\sfC\|^{2}_{\sF, \bW_{\omega}}
          -\|\widehat{\sfB}_{0}\|^{2}_{\sF, \bW_{\omega}}
          \right\}\\
          & = \|\left< \sfZ, (\sfC - \sfB_{0})\right>\|_{\sF}^{2} +
          2\left<\sP\bE, \left< \sfZ, (\widehat{\sfB}_{0}-\sfC)\right> \right>_{\sF}\\ 
          & \qquad + (NJ)\left\{
          \|\sfC\|^{2}_{\sF, \bW_{\omega}}
          -\|\widehat{\sfB}_{0}\|^{2}_{\sF, \bW_{\omega}}
          \right\}\\
          &\leq \|\left< \sfZ, (\sfC - \sfB_{0})\right>\|_{\sF}^{2} 
          + 2\sigma_{1}(\sP\bE)\sqrt{2R_{0}}\|\left<\sfZ, (\widehat{\sfB}_{0}- \sfC)\right>\|_{\sF}\\
          & \qquad + (NJ)\left\{
          \|\sfC\|^{2}_{\sF, \bW_{\omega}}
          -\|\widehat{\sfB}_{0}\|^{2}_{\sF, \bW_{\omega}}
          \right\}
          \numberthis
\end{align*}
\endgroup
Define, $\bP = \bI_{Q}\otimes\bI_{P}\otimes\int \bB''(t)\bB''(t)^{\tp}dt$ and observe the fact that $\lambda_{\max}(\bP) = \lambda_{\max}(\int \bB''(t)\bB''(t)^{\tp}dt)$. 
Now consider for any tensor with $\sfC$, using Lemma \ref{lemma:trace},
\begin{equation}
\begin{split}
    &\vect(\sfC)^{\tp}\bP\vect(\sfC) - \vect(\widehat{\sfB}_{0})^{\tp}\bP\vect(\widehat{\sfB}_{0})\\
    &= 
    \trace\{\bP(\vect(\sfC)\vect(\sfC)^{\tp} - \vect(\widehat{\sfB}_{0})\vect(\widehat{\sfB}_{0})^{\tp})\} \\
    & \leq \lambda_{\max}(\bP)\trace\{\vect(\sfC)\vect(\sfC)^{\tp} - \vect(\widehat{\sfB}_{0})\vect(\widehat{\sfB}_{0})^{\tp}\}\\
    & = \lambda_{\max}(\int \bB''(t)\bB''(t)^{\tp}dt)\{\|\sfC\|_{\sF}^{2} - \|\widehat{\sfB}_{0}\|_{\sF}^{2}\}
\end{split}
\end{equation}
As a consequence of the above inequality, 
\begin{equation}
    \begin{split}
    & \|\sfC\|_{\sF, \bW_{\omega}}^{2}
    - \|\widehat{\sfB}_{0}\|_{\sF, \bW_{\omega}}^{2}\\
    &= \vect(\sfC)^{\tp}\bW_{\omega}\vect(\sfC) -  \vect(\widehat{\sfB}_{0})^{\tp}{\bW}_{\omega}\vect(\widehat{\sfB}_{0})\\ 
    & = \theta\left(\vect(\sfC)^{\tp}\bP\vect(\sfC) \} 
    - \vect(\widehat{\sfB}_{0})^{\tp}\bP\vect(\widehat{\sfB}_{0}) \}\right)\\
    & \qquad + \phi\left(\vect(\sfC)^{\tp}\vect(\sfC)\}
    - \vect(\widehat{\sfB}_{0})^{\tp}\vect(\widehat{\sfB}_{0})\}
    \right)\\
    &\leq (\theta\lambda_{\max}(\bP) + \phi)
    \left\{\|\sfC\|_{\sF}^{2} -  \|\widehat{\sfB}_{0}\|_{\sF}^{2}\right\}\\
    & = (\theta\lambda_{\max}(\int \bB''(t)\bB''(t)^{\tp}dt) + \phi)
    \left\{\|\sfC\|_{\sF}^{2} -  \|\widehat{\sfB}_{0}\|_{\sF}^{2}\right\}
    \end{split}
\end{equation}
Then for tensor $\sfC$ with $\rank(\sfC) \leq R_{0}$ and $I = \min(H, PQ)$, we have the following inequalities 
\begingroup
\allowdisplaybreaks
\begin{align*}
     &\|\sfC\|_{\sF}^{2} -  \|\widehat{\sfB}_{0}\|_{\sF}^{2}\\
    &= \sum_{i = 1}^{I}\sigma_{i}^{2}(\sfC_{(1)}) - \sum_{i = 1}^{I}\sigma_{i}^{2}(\widehat{\sfB}_{0(1)})\\
    & \leq \left\{\sigma_{1}(\sfC_{(1)}) + \sigma_{1}(\widehat{\sfB}_{0(1)})\right\}
    \left\{\sum_{i = 1}^{I}\left(\sigma_{i}(\sfC_{(1)}) - \sigma_{i}(\widehat{\sfB}_{0(1)})\right)\right\}\\
    &\overset{(i)}{\leq} \left\{2\sigma_{1}(\sfC_{(1)}) + \sigma_{1}(\widehat{\sfB}_{0(1)} - \sfC_{(1)}) \right\}
    \left\{\sum_{i=1}^{I}\sigma_{i}(\widehat{\sfB}_{0(1)} - \sfC_{(1)})\right\}\\
    &= \left\{2\sigma_{1}(\sfC_{(1)}) + \sigma_{1}(\widehat{\sfB}_{0(1)} - \sfC_{(1)}) \right\}
    \left\{\sum_{i=1}^{R_{0}}\sigma_{i}(\widehat{\sfB}_{0(1)} - \sfC_{(1)})\right\}\\
    &\overset{(ii)}{\leq} \left\{2\sigma_{1}(\sfC_{(1)}) + \|\widehat{\sfB}_{0} - \sfC\|_{\sF}\right\}
    \left\{\sqrt{2R_{0}}\|\widehat{\sfB}_{0} - \sfC\|_{\sF}\right\}\\
    & \leq \sqrt{2R_{0}}
    \left\{ 
    2\sigma_{1}(\sfC_{(1)}) + \|\widehat{\sfB}_{0} - \sfC\|_{\sF}
    \right\}^{2}
    \numberthis
\end{align*}
\endgroup
where the inequality (i) follows since $\sigma_{i+j-1}(\bA+ \bB) \leq \sigma_{i}(\bA) + \sigma_{j}(\bB)$, or in other words due to Weyl additive perturbation theory which states that $\sigma_{i+j-1}(\bA) \leq \sigma_{i}(\bB) + \sigma_{j}(\bA - \bB)$. 
Inequality (ii) holds since by definition $\sigma_{1}(\bA) = \|\bA\|_{2}$, an operator norm; $\|\bA\|_{2} \leq \|\bA\|_{\sF}$; and due to Cauchy-Schwarz inequality along with the fact that $\rank(\bA + \bB) \leq rank(\bA) + rank(\bB)$. 
Also, 
$\|\left< \sfZ, (\widehat{\sfB}_{0} - \sfC) \right>\|_{\sF}^{2} = \| \left< \sfZ_{(1)}, (\widehat{\sfB}_{0} - \sfC)_{(3)} \right>\|_{\sF}^{2} = \| \sfZ_{(1)}^{\tp}(\widehat{\sfB}_{0} - \sfC)_{(3)}\|_{\sF}^{2}
\geq \|\widehat{\sfB}_{0} - \sfC\|_{\sF}^{2}\lambda_{\min}(\bZ_{(1)}^{\tp}\bZ_{(1)})$ due to Lemma \ref{lemma:trace}. Therefore, using the inequality 
$(x + y)^{2} \leq 2(x^{2} + y^{2})$
we have for $\mu = (NJ)(\theta\lambda_{\max}(\int \bB''(t)\bB''(t)^{\tp}dt) + \phi)\sqrt{2R_{0}}$
\begin{equation}
    \begin{split}
        &(NJ) \left\{\|\sfC\|_{\sF, \bW_{\omega}}^{2} - \|\widehat{\sfB}_{0}\|_{\sF, \bW_{\omega}}^{2}\right\}\\
        &\leq \mu
        \left\{
            2\sigma_{1}(\sfC_{(1)}) + \lambda^{-1}_{\min}(\sfZ_{(1)}^{\tp}\bZ_{(1)})\|\left<\sfZ, (\widehat{\sfB}_{0} - \sfC)\right>\|_{\sF}
        \right\}^{2}\\
        &\leq \mu
        \left\{
        4\sigma_{1}^{2}(\sfC_{(1)}) 
        + \lambda_{\min}^{-2}(\sfZ_{(1)}^{\tp}\sfZ_{(1)})\|\left< \sfZ, (\widehat{\sfB}_{0} - \sfC) \right>\|_{\sF}^{2} 
        \right\}\\
        & \leq 4\mu\sigma_{1}^{2}(\sfC_{(1)})
        + 2\mu\lambda_{\min}^{-2}(\sfZ_{(1)}^{\tp}\bZ_{(1)})
        \|\left< \sfZ, (\widehat{\sfB}_{0} - \sfB_{0})\right>\|_{\sF}^{2}\\
        & \qquad 
        + 2\mu\lambda_{\min}^{-2}(\sfZ_{(1)}^{\tp}\sfZ_{(1)})
        \|\left< \sfZ, (\sfC - \sfB_{0}) \right>\|_{\sF}^{2}
        \\
    \end{split}
\end{equation}
Therefore, we obtain the bound for the prediction error as the following way using the assumption that $\lambda_{\min}(\sfZ_{(1)}^{\tp}\sfZ_{(1)})$ is bounded below by $\lambda$ with high probability and by inequality $2xy \leq x^{2}/a + ay^{2}$ in $(\star)$, consider the following from Equation (\ref{step1}), 
\begingroup
\allowdisplaybreaks
\begin{align*}
        &\|\left< \sfZ, (\widehat{\sfB}_{0} - \sfB_{0}) \right>\|_{\sF}^{2}\\
        & \leq \|\left<\sfZ, (\sfC - \sfB_{0}) \right>\|_{\sF}^{2} 
        + 2\sigma_{1}(\sP\bE)\sqrt{2R_{0}}\|\left< \sfZ, (\widehat{\sfB}_{0} - \sfC) \right>\|_{\sF} \\
        & \qquad 
        + 2\mu\lambda^{-2}\|\left< \sfZ, (\widehat{\sfB}_{0} - \sfB_{0})\right>\|_{\sF}^{2}\\
        & \qquad + 2\mu\lambda^{-2}\|\left<\sfZ, (\sfC - \sfB_{0})\right>\|_{\sF}^{2} 
        + 4\mu\sigma_{1}^{2}(\sfC_{(1)})
        \\
        & \leq \|\left< \sfZ, (\sfC - \sfB_{0}) \right>\|_{\sF}^{2} 
        + 2\sigma_{1}(\sP\bE)\sqrt{2R_{0}}\|\left< \sfZ, (\widehat{\sfB}_{0} - \sfB_{0}) \right>\|_{\sF}\\
        & \qquad + 2\sigma_{1}(\sP\bE)\sqrt{2R_{0}}\|\left<\sfZ, (\sfC - \sfB_{0})\right>\|_{\sF}
        \\
        & \qquad + 2\mu\lambda^{-2}\|\left<\sfZ, (\widehat{\sfB}_{0} - \sfB_{0})\right>\|_{\sF}^{2} \\
        & \qquad +  2\mu\lambda^{-2}\|\left<\sfZ, (\sfC - \sfB_{0})\right>\|_{\sF}^{2} +4\mu\sigma^{2}_{1}(\sfC_{(1)})\\
        &\overset{(\star)}{\leq}
        4\mu\sigma_{1}^{2}(\sfC_{(1)}) + 
        \|\left< \sfZ, (\sfC - \sfB_{0}) \right>\|_{\sF}^{2}
        \\
        &\qquad+ 2R_{0}a\sigma_{1}^{2}(\sP\bE) + \|\left< \sfZ, (\widehat{\sfB}_{0} - \sfB_{0}) \right>\|_{\sF}^{2}/a \\
        & \qquad + 2\mu\lambda^{-2}\|\left<\sfZ, (\widehat{\sfB}_{0} - \sfB_{0})\right>\|_{\sF}^{2}
        \\
        &\qquad + 2R_{0}b\sigma_{1}^{2}(\sP\bE) + \|\left<\sfZ, (\sfC - \sfB_{0})\right>\|_{\sF}^{2}/b \\
        & \qquad +  2\mu\lambda^{-2}\|\left<\sfZ, (\sfC - \sfB_{0})\right>\|_{\sF}^{2}\\ 
        & \leq 
        4\mu\sigma_{1}^{2}(\sfC_{(1)})
        +2(a+b)R_{0}\sigma_{1}^{2}(\sP\bE)\\
        &\qquad +
        \left(
            \frac{b + 1}{b} + 2\mu\lambda^{-2}
        \right)
        \|\left< \sfZ, (\sfC - \sfB_{0}) \right>\|_{\sF}^{2} \\
        & \qquad + \left(
        \frac{1}{a} + 2\mu\lambda^{-2}
        \right)
        \|\left< \sfZ, (\widehat{\sfB}_{0} - \sfB_{0}) \right>\|_{\sF}^{2}
        \numberthis
\end{align*}
\endgroup
Therefore, by doing some algebra, we have, 
\begin{equation}
\begin{split}
    &\left(\frac{a-1}{a} - 2\mu\lambda^{-2}\right)\|\left< \sfZ, (\widehat{\sfB}_{0} - \sfB_{0}\right>\|_{\sF}^{2}\\
        & \leq 
        4\mu\sigma_{1}^{2}(\sfC_{(1)}) + 2(a+b)R_{0}\sigma_{1}^{2}(\sP\bE)\\
        &\qquad +
        \left(
        \frac{b+1}{b} + 2\mu\lambda^{-2}
        \right)
        \|\left< \sfZ, (\sfC - \sfB_{0})\right>\|_{\sF}^{2}
        \\
        &\|\left< \sfZ, (\widehat{\sfB}_{0} - \sfB_{0})\right>\|_{\sF}^{2}\\
        &\leq \left(
            \sC(\delta)^{-1} - 2\mu\lambda^{-2}
        \right)^{-1}
        \left\{
            4\mu\sigma_{1}^{2}(\sfC) + 2(1+\delta)
             R_{0}\sigma_{1}^{2}(\sP\bE)
        \right\}\\
        & + \left( 
            \frac{\sC(\delta) + 2\mu\lambda^{-2}}{\sC(\delta)^{-1} - 2\mu\lambda^{-2}}
        \right)\|\left< \sfZ, (\sfC - \sfB)\right>\|_{\sF}^{2}
\end{split}
\end{equation}
where $\sC(\delta) = 1 + 2/\delta$ and $\sigma_{1}(\sfC) = \max\{\sigma_{1}(\sfC_{(1)}), \sigma_{1}(\sfC_{(2)}), \sigma_{1}(\sfC_{(3)})\}$.
Last inequality holds after choosing $a = 1 + \delta/2$ and $b = \delta/2$. Now it is enough to provide an upper bound of the largest singular value of $\sP\bE$. 
For some positive constant $C_{0}$, with high probability $1-C_{0}N^{-a\tau}$, by lemma \ref{lemma:sigma},
\begin{align*}
    &\|\left<\sfZ, (\widehat{\sfB}_{0} - \sfB_{0})\right>\|_{\sF}^{2}\\
        &\leq \left(
            \sC(\delta)^{-1} - 2\mu\lambda^{-2}
        \right)^{-1}
        \left\{
            4\mu\sigma_{1}^{2}(\sfC) + \right.\\
            &\left.
            {2R_{0}}(1+\delta)Q^{2}\xi^{2} N^{2\tau+2}{J}
        \right\}\\
        & + \left( 
            \frac{\sC(\delta) + 2\mu\lambda^{-2}}{\sC(\delta)^{-1} - 2\mu\lambda^{-2}}
        \right)\|\left< \sfZ, (\sfC - \sfB_{0}) \right>\|_{\sF}^{2}
        \numberthis
\end{align*}
Since $\sfC$ is an arbitrary matrix with $\rank(\sfB) \leq R_{0}$, the choosing $\sfC = \sfB_{0}$, we have, 
\begin{align*}
    \label{eq:predictionError}
     &\|\left<\sfZ, (\widehat{\sfB}_{0} - \sfB_{0}) \right>\|_{\sF}^{2}\\
     &\leq \left(
            \sC(\delta)^{-1} - 2\mu\lambda^{-2}
        \right)^{-1}
            \left\{
            4\mu\sigma_{1}^{2}(\sfC) + {2R_{0}}(1+\delta)Q^{2}\xi^{2} N^{2\tau+2}{J}
        \right\}
\end{align*}
\par
Estimation bound can be derived from the above expression under condition $\lambda_{\min}(\sfZ_{(1)}^{\tp}\sfZ_{(1)}) \geq \lambda$, from inequality \ref{eq:predictionError}, we have, 
\begin{align*}
    &\|\widehat{\sfB}_{0} - \sfB_{0}\|_{\sF}^{2}\\
    &\leq \lambda^{-1}\left(
            \sC(\delta)^{-1} - 2\mu\lambda^{-2}
        \right)^{-1}\\
    & \qquad \times    
        \left\{
            4\mu\sigma_{1}^{2}(\sfC) + {2R_{0}}(1+\delta)Q^{2}\xi^{2} N^{2\tau+2}{J}
        \right\}
        \numberthis
\end{align*}

\subsection{Proof of Theorem 2}
Observe that, due to Lemma \ref{lemma:spline} and the fact that, 
\begin{align*}
    &\int_{\sT}[\bbB(t)^{\tp}(\widehat{\sfB}_{0} - \sfB_{0})]^{2}f_{T}(t)dt\\
    &= 
    (\widehat{\sfB}_{0} - \sfB_{0})^{\tp}
    \left(\int_{\sT}\bbB_{h}(t)\bbB_{h}(t)^{\tp}f_{T}(t)dt\right)
    (\widehat{\sfB}_{0} - \sfB_{0})\\
    & \propto 
    \|\widehat{\bB}_{0} - \bB_{0}\|^{2} = O_{P}(a_{N})
    \numberthis
\end{align*}
So, we can derive
\begin{equation}
    \begin{split}
        & \int_{\sT}(\widehat{\beta}_{\bullet}(t) - \beta_{\bullet}(t))^{2}f_{T}(t)dt\\
        &=
        \int_{\sT}\left(
        \bbB(t)^{\tp}(\widehat{\sfB}_{0} - \sfB_{0}) + \bbB(t)^{\tp}\sfB_{0} - \beta(t) \right)^{2}f_{T}(t)dt\\
        &\leq 2\|\widehat{\sfB}_{0} - \sfB_{0}\|_{\sF}^{2} 
        + 2\int_{\sT}[\bbB(t)^{\tp}\sfB_{0} - \beta(t)]^{2}f_{T}(t)dt\\
        & \leq  O_{P}(a_{N}) + O(K_{N}^{-2(\nu + 1)})
\end{split}
\end{equation}

\section*{More details of \texttt{forrestGump} data in Section \ref{sec:real-data}}
In the audio-visual movie, the video track of the movie was extracted and encoded as H.264 ($1280 \times 720$ at 25 fps). 
The movie was shown on a $1280 \times 1024$ pixel screen with a 63 cm viewing distance in 720p resolution. The temporal resolution of the participants' eye gaze recording was 1000Hz.
\par
All fMRI acquisitions had the following parameters: T2*- weighted echo-planner images with 2 sec repetition time (TR), 30 ms echo time, and 90-degree flip angle were acquired during stimulation using a 3 Tesla MRI scanner. The dimension of the images for each time-point was $80 \times 80 \times 35$ (with pixel dimension $3 \times 3 \times 3.3 mm^{3}$). The number of volumes acquired for the selected session was 451.
\par
Brain imaging data comes directly from the scanner and hence it is difficult to answer scientific questions based on these raw data. 
Therefore, pre-processing of fMRI data plays an important role for studying imaging data. 
Pre-processing steps are performed by \texttt{fslr} package in R \cite{jenkinson2012fsl,muschelli2015fslr}. 
Slice timing correction method corrects the variability in the BOLD responses that are due to the fact that data in different voxels are acquired at different time. 
This step has been performed using the function \textit{slicetimer} whether indexing is done from top and order of the acquisition is continuous. 
Later \textit{bias\_correct} function is used for bias field corrections. 
After that, motion correction is performed to correct the variability due to head movement. 
Motion correction is a special case of image registration where a series of images are aligned by considering mean image over all time-points as target image for each individuals. 
It is easy to visualise that any rigid body movement can be described by six parameters. When a subject lies inside the scanned, the center of any voxel is its head occupies a point in space that can be characterised by triplet (x, y, z). 
By convention, z-axis is parallel to the bore of the magnet and x-axis is passing through the subject ears from left to right side and y-axis is a pole that enters through the back of the head and exits in forehead. 
Based on this coordinate system, possible rigid body movements are translation along x, y and z axes and rotation about x, y and z axes. Mean BOLD responses is taken as the standard and then rigid body transformation is performed for rest of TRs until each of the data sets agrees as closely as possible with the mean data. 
Motion corrected images have same dimension, voxel spacing, origin and direction as the images gathered from scanner.
Here we use \textit{antsrMotionCalculation} function which provides an R-wrapper around the Insight Segmentation and Registration Toolkit (ITK). A calculated frame-wise motion parameters could be obtained due to a rigid body transformation that was performed which can be described by six parameters as illustrated in Figure \ref{fig:appendix-mocoParams-task} where three parameters contain the rotation matrix (rotation along x, y and z axes respectively) and other three parameters are translation vectors (translation along x, y and z axes respectively) at each TR. Additionally motion-corrected time-series data has be provided. 
\par
The goal of the next step is to align the functional and structural images to improve the spatial resolution. 
Brain activity is restricted to brain tissue only, therefore brain extraction of the anatomical image must be performed to remove artifacts.
Furthermore, the functional brain atlas provides information on the location of the functional brain region, aggregating knowledge on the brain functionality. 
Here we use an atlas proposed by Montreal Neurological Institute (MNI) where MNI-atlas was created by averaging the results from high resolution structural images taken over 152 different brains with dimension $182 \times 218 \times 182$ with pixel dimension $1\times1\times1 \text{mm}^{3}$ and it is also provided in FSL as MNI152\_T1\_1mm\_brain. 
Spatial smoothing to the data to reduce non-systematic high frequency spatial noise is conducted which subsequently reduces high-frequency noise that changes quickly across small regions of the brain, we take (6, 6, 7) as kernel width (FWHM). Temporal filtering is used to reduce the effect of slow fluctuations in the local magnetic field properties to the scanner. 
Interested readers are encouraged to study \cite{ashby2011statistical,wager2015principles} for more details about the pre-processing steps. 

\begin{figure*}[b]
       \includegraphics[width=\textwidth]{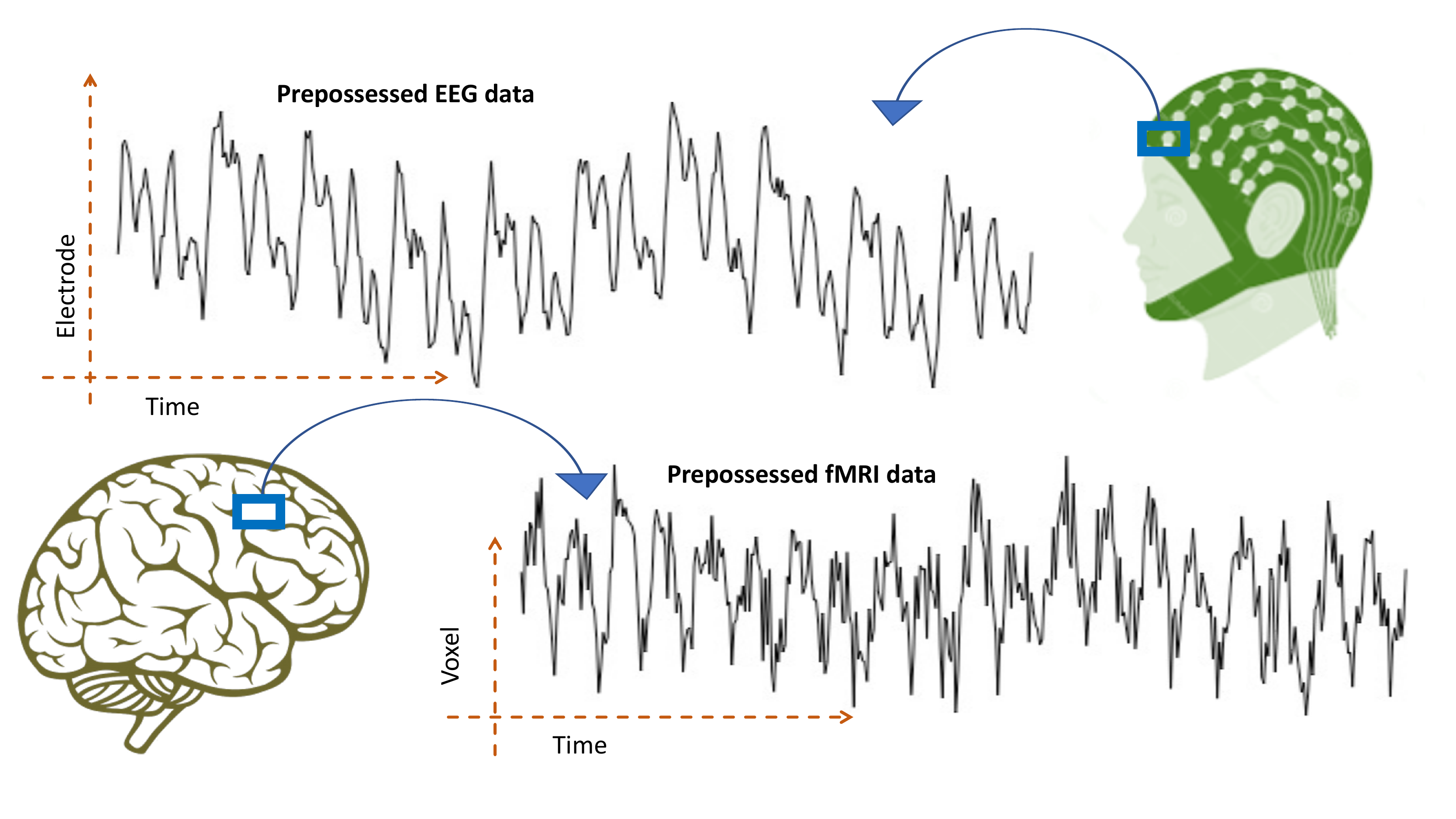}
       \caption{\texttt{Multi-modal data (top part)}: An example of multi-modal data analysis which seeks to explore the relationship between EEG and fMRI data.}
       \label{fig:multimodal}
\end{figure*}

\begin{figure*}[b]
 \subfloat{%
    \includegraphics[width=\textwidth]{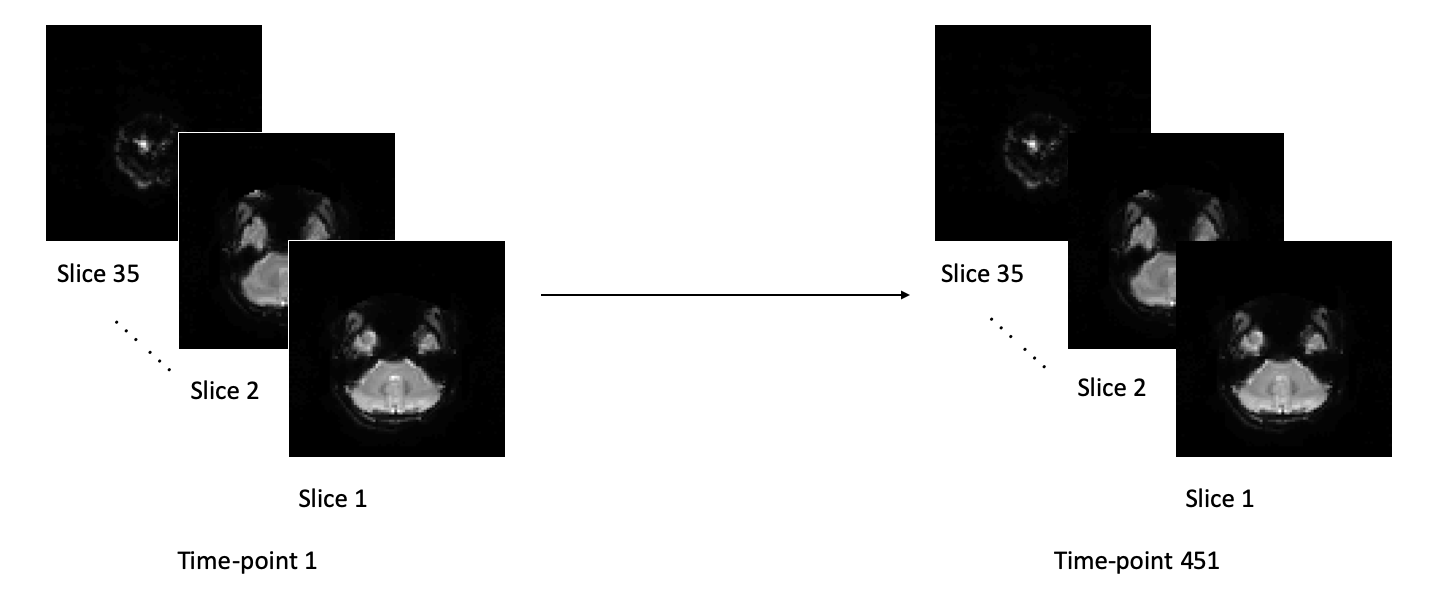}
}\\
\subfloat{%
    \includegraphics[width=\textwidth]{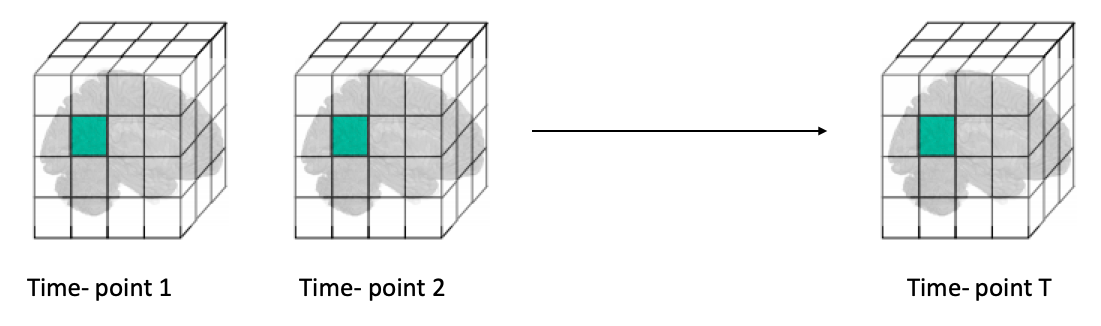}
 }
 \caption{
\texttt{ForrestGump data (bottom part)}: 
 (Top panel) BOLD fMRI for an example subject during their first run (see Section \ref{sec:real-data} for details). 35 axial slices (thickness 3.0 mm) represents the third mode of the tensor with $80 \times 80$ voxels ($3.0 \times 3.0$ mm) in-plate resolution measured at every repetition time (TR) of 2 seconds. 
 (Bottom panel) fMRI dataset consists of a time series of 3D images (tensors) at each TR (source: \cite{wager2015principles}).
 }\label{fig:whyTensor}
\end{figure*}

\begin{figure*}[b]
    \centering
    \includegraphics[width=0.45\textwidth]{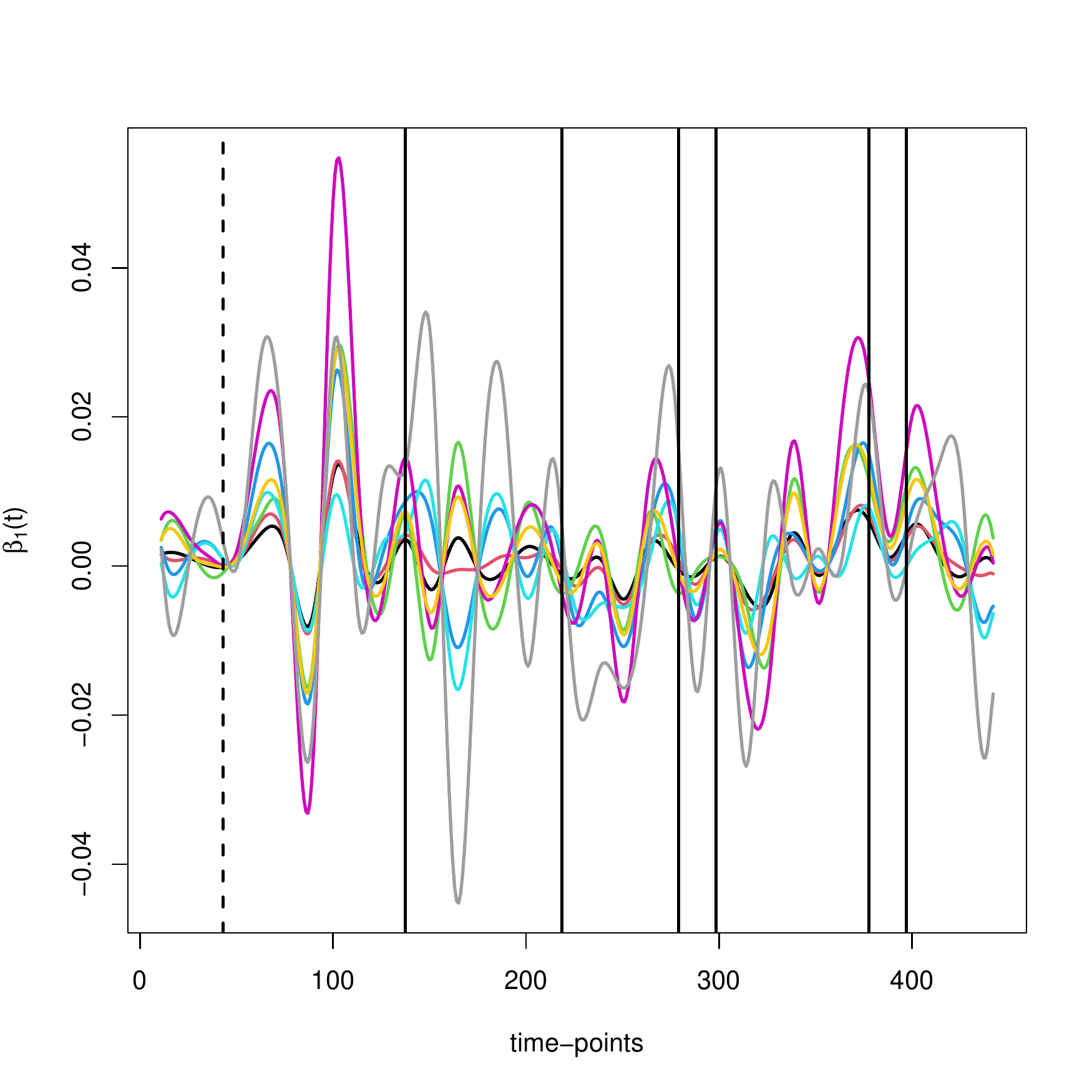}
    \includegraphics[width=0.45\textwidth]{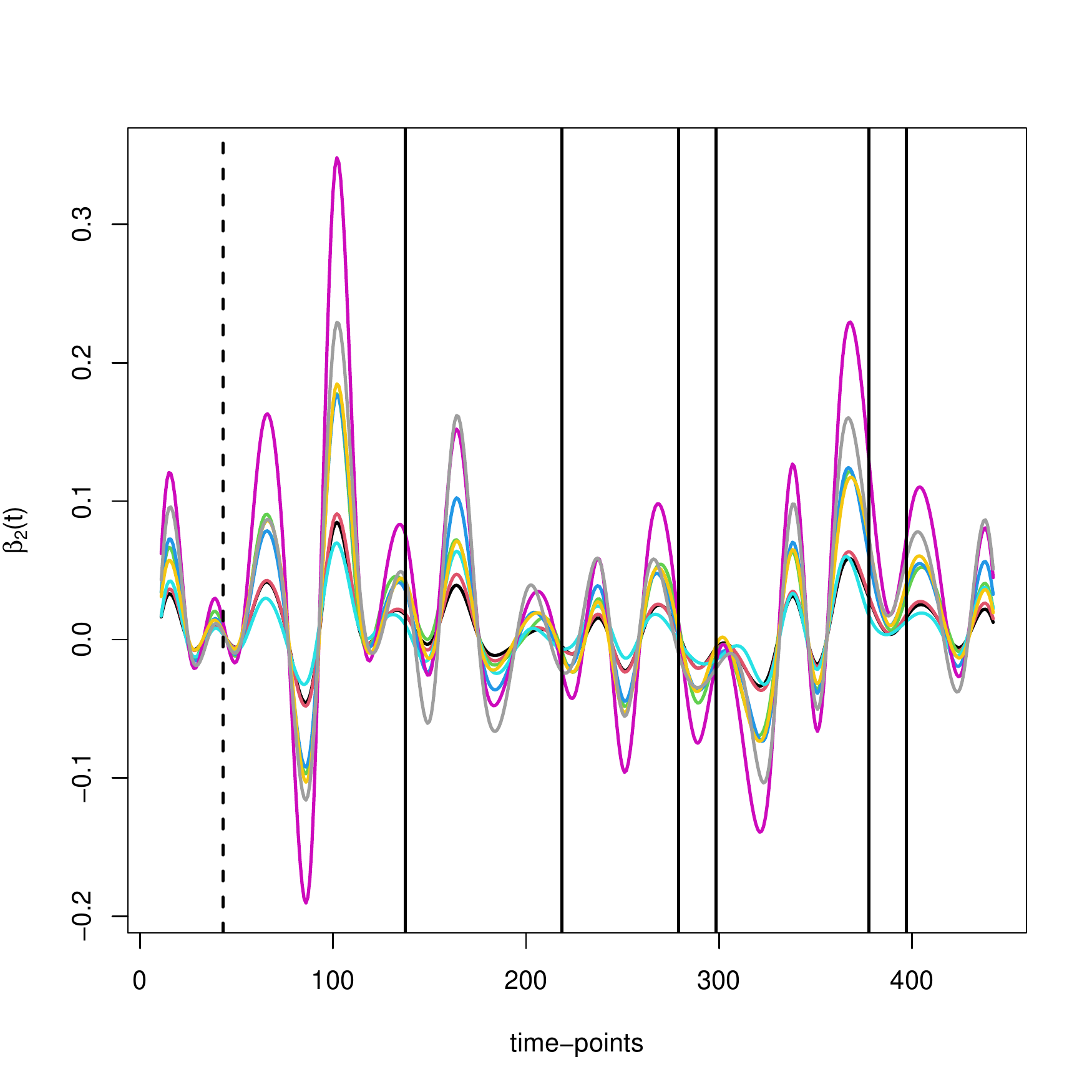}
    \includegraphics[width=0.45\textwidth]{tensor-beta2-rank3ext2int.pdf}
    \includegraphics[width=0.45\textwidth]{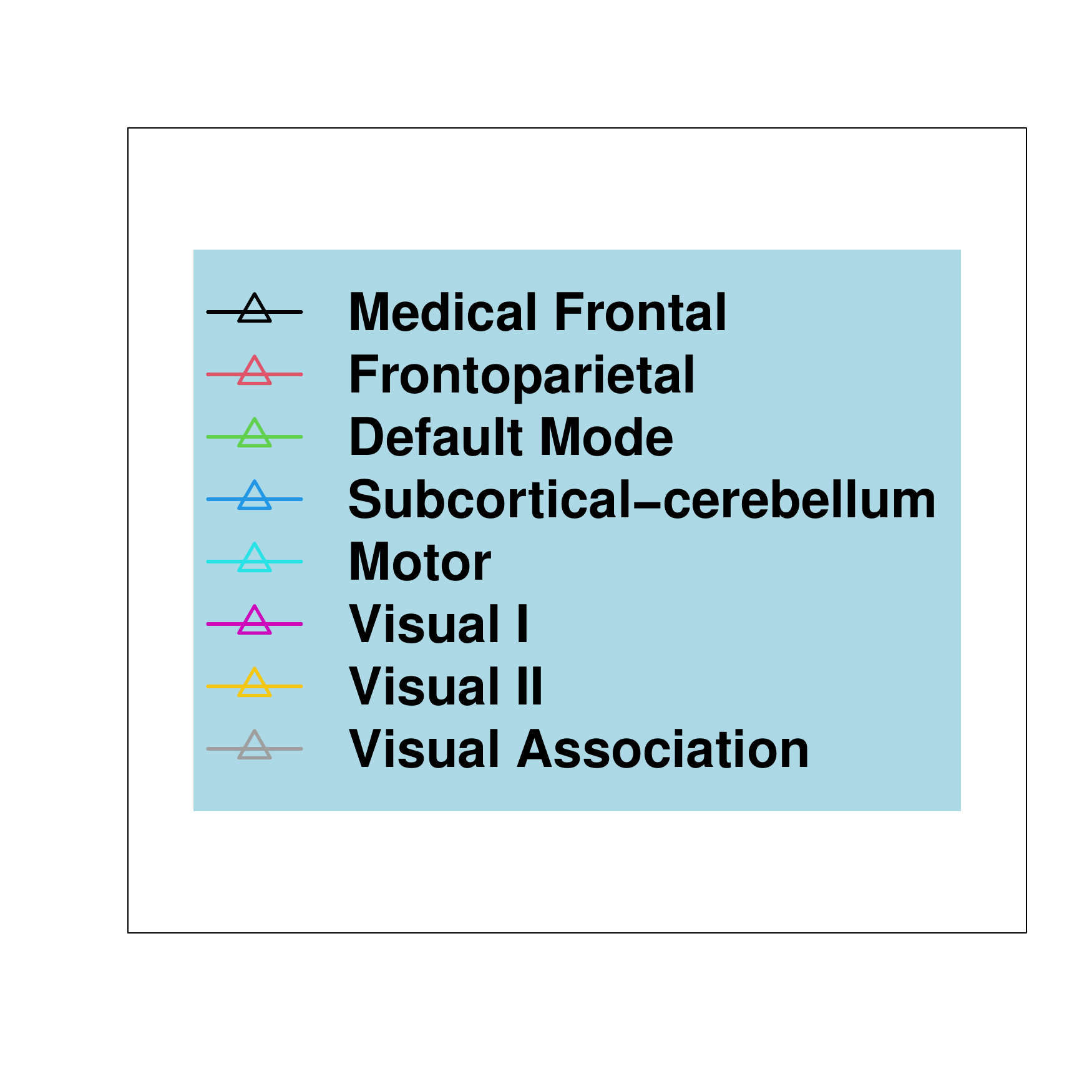}
    \caption{\texttt{forrestGump-data} results: Estimates of the coefficients corresponding to three visual features. 
    Here each panel represents estimated $\bbeta(t)$ corresponding to
    distance, angle of eye-gaze and pupil area, respectively,
    over eight different brain networks. 
    The dashed vertical lines represent scene changes in the movie. The first dashed line depicts the end of the opening credits. Subsequent bold lines represent scene changes that alternate between interior and exterior settings.}
    \label{fig:coef}
\end{figure*}

\begin{figure*}[b]
    \centering
    \includegraphics[scale=0.35]{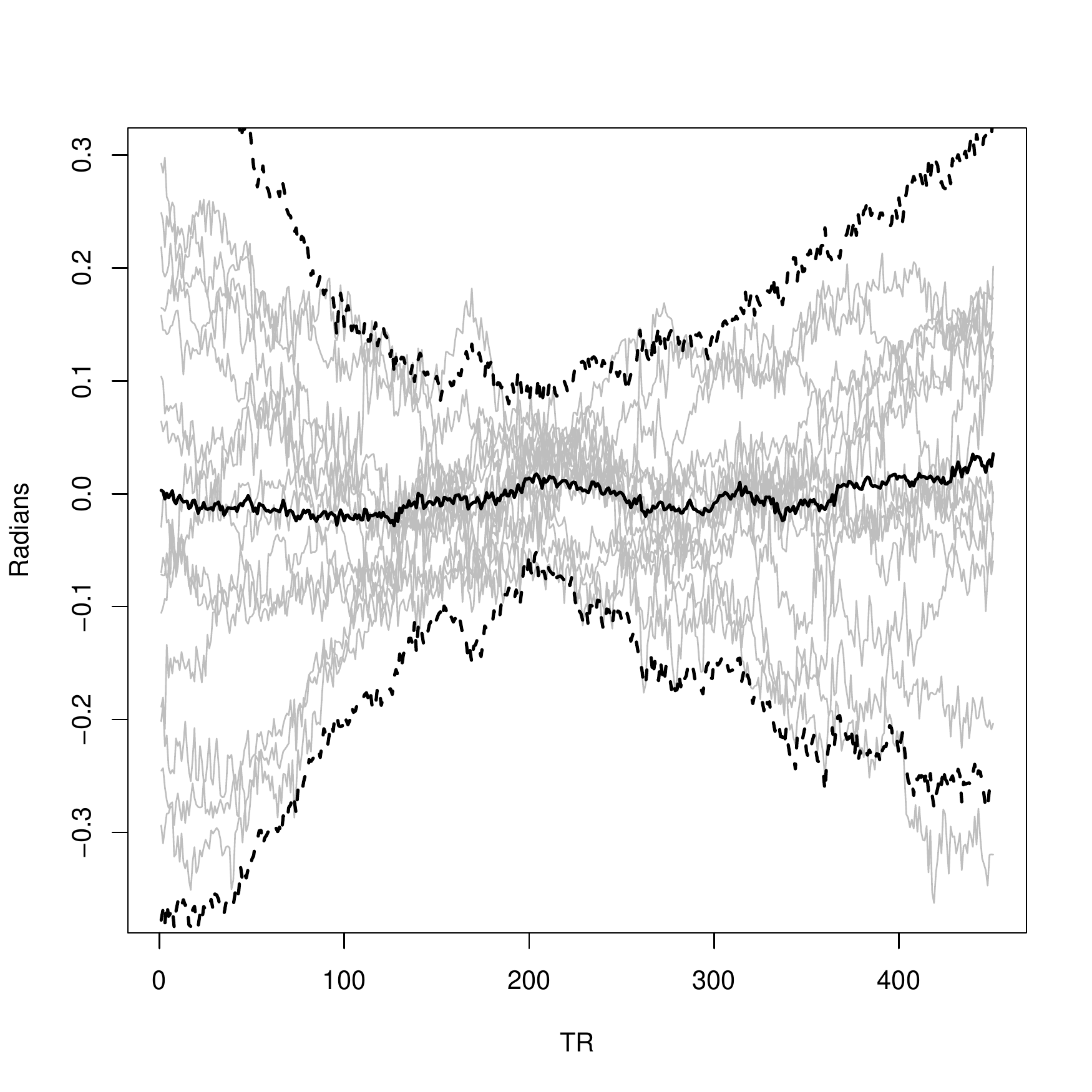}
    \includegraphics[scale=0.35]{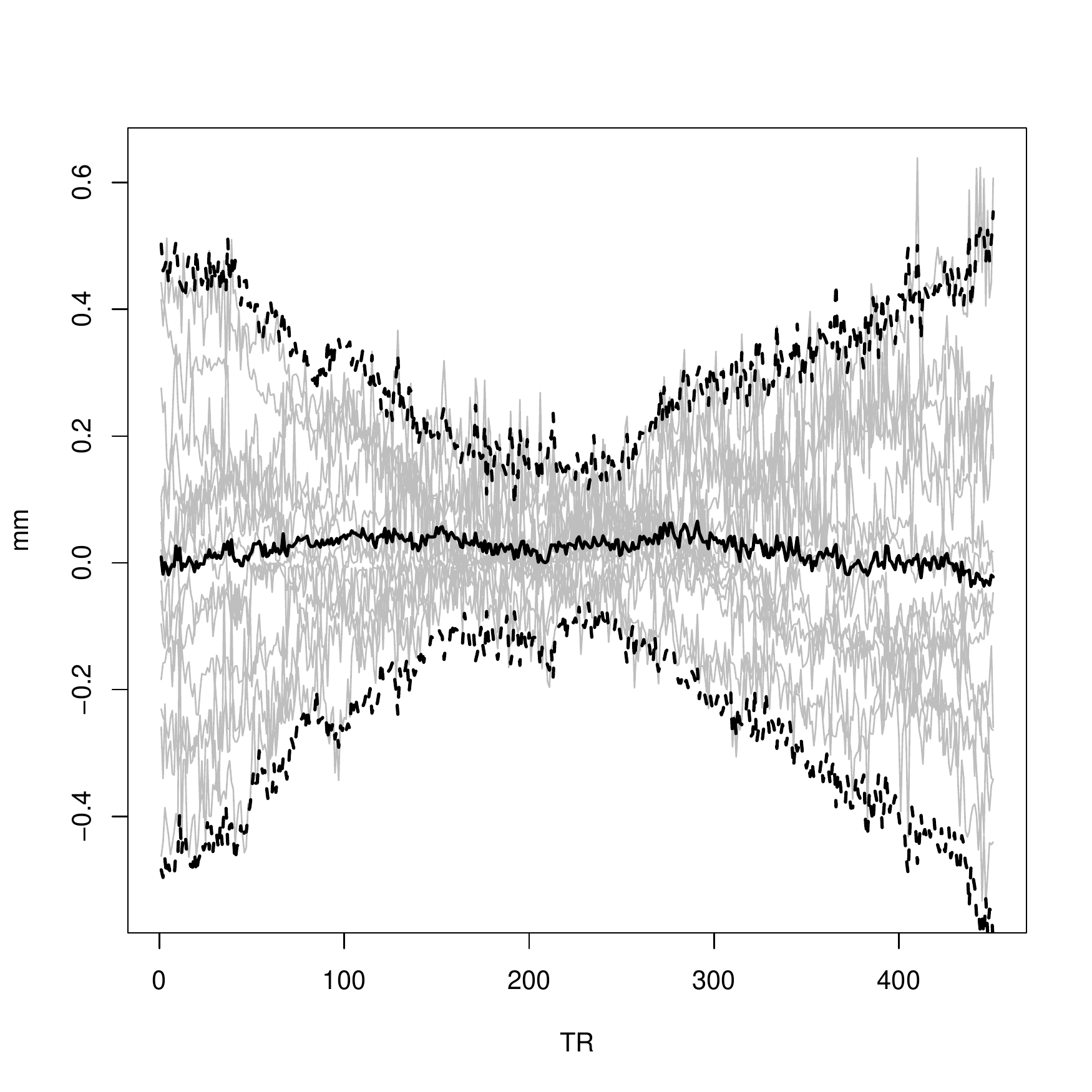}
    \includegraphics[scale=0.35]{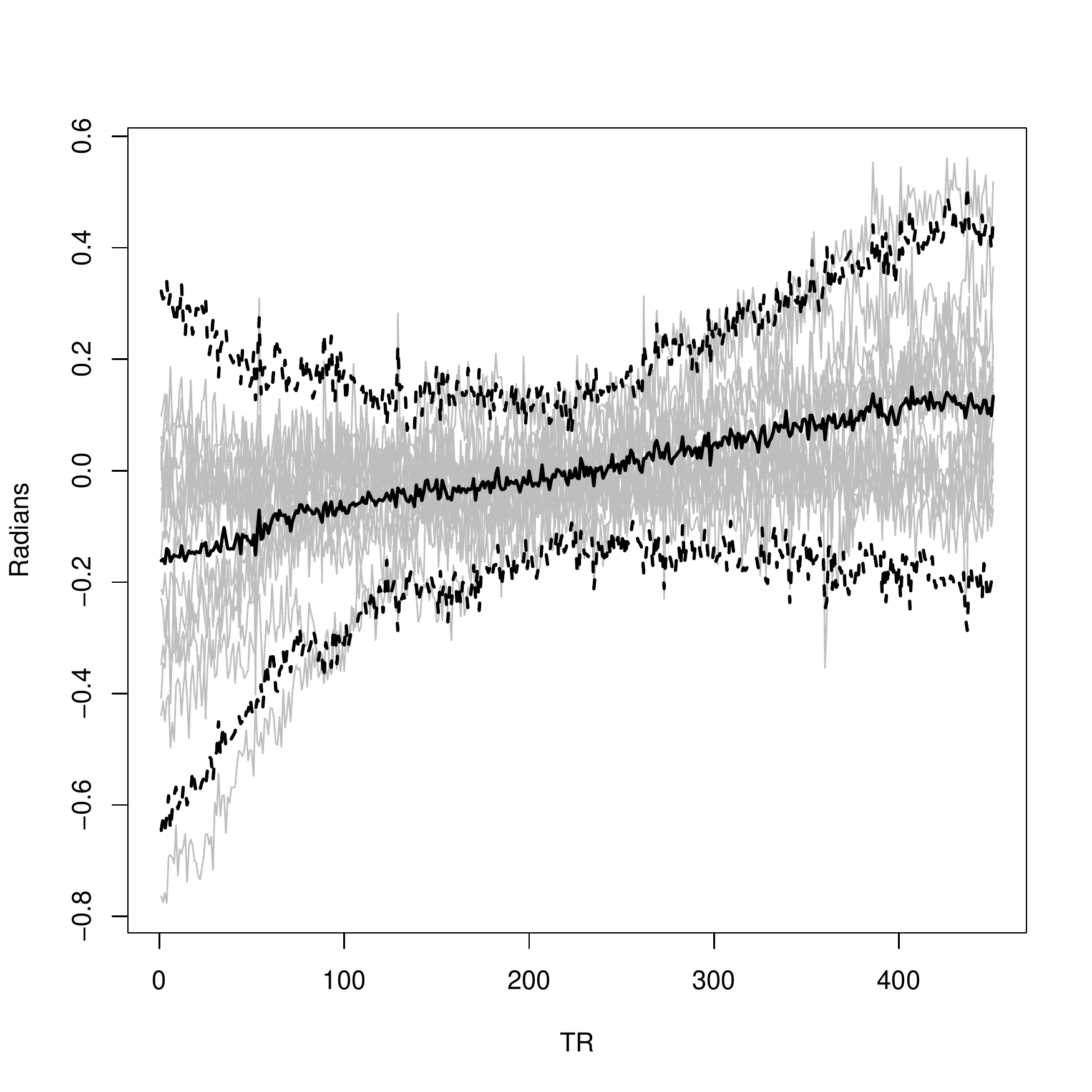}
    \includegraphics[scale=0.35]{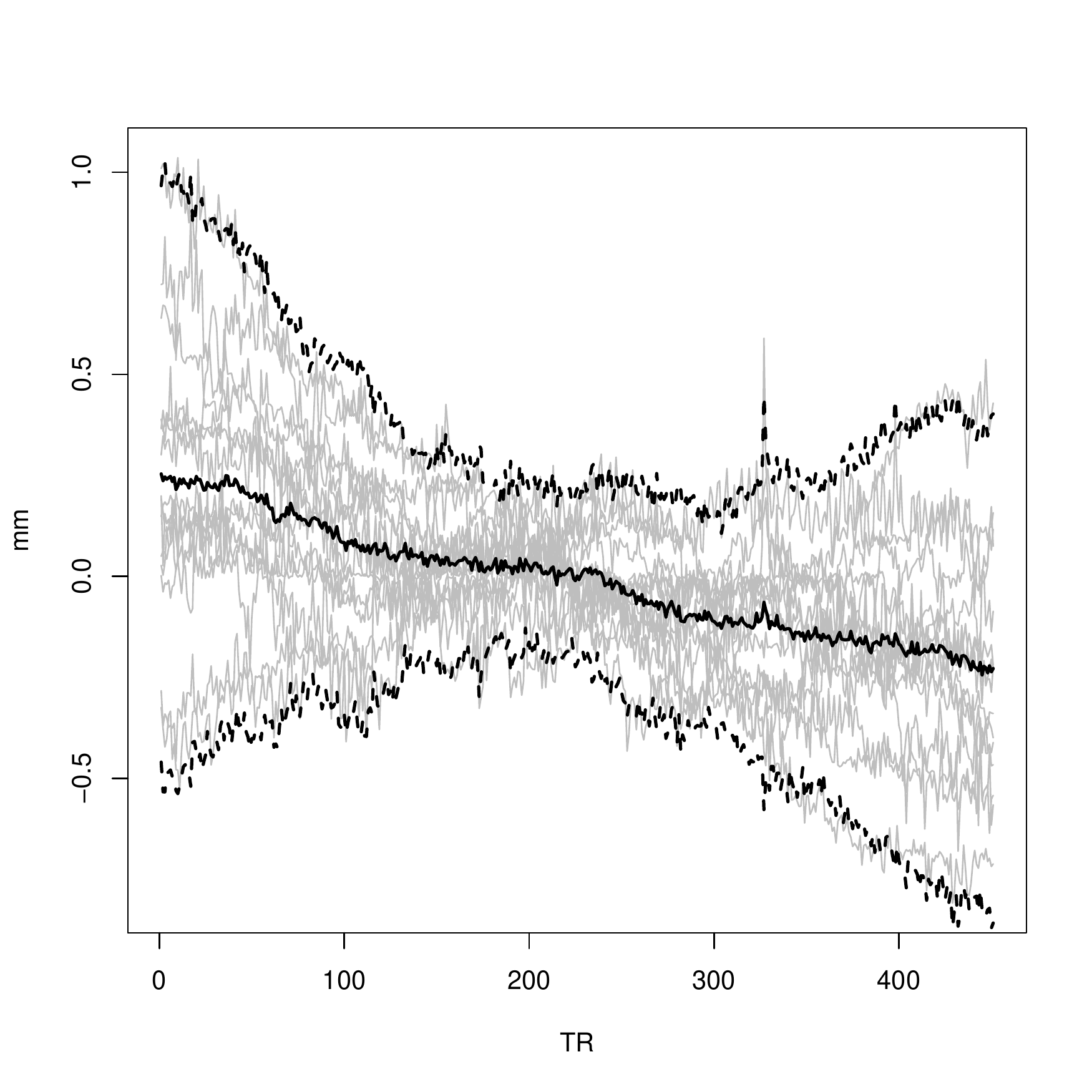}
    \includegraphics[scale=0.35]{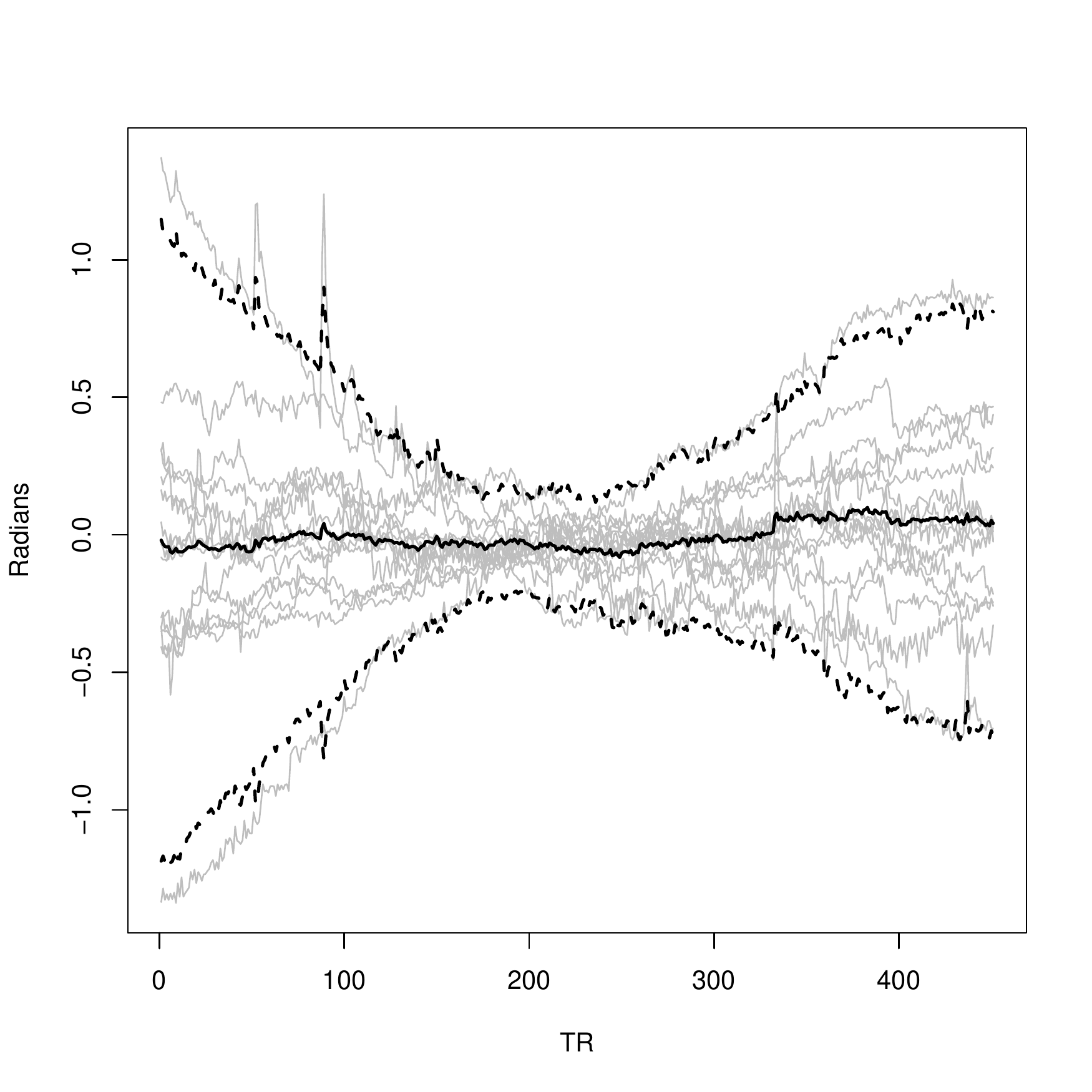}
    \includegraphics[scale=0.35]{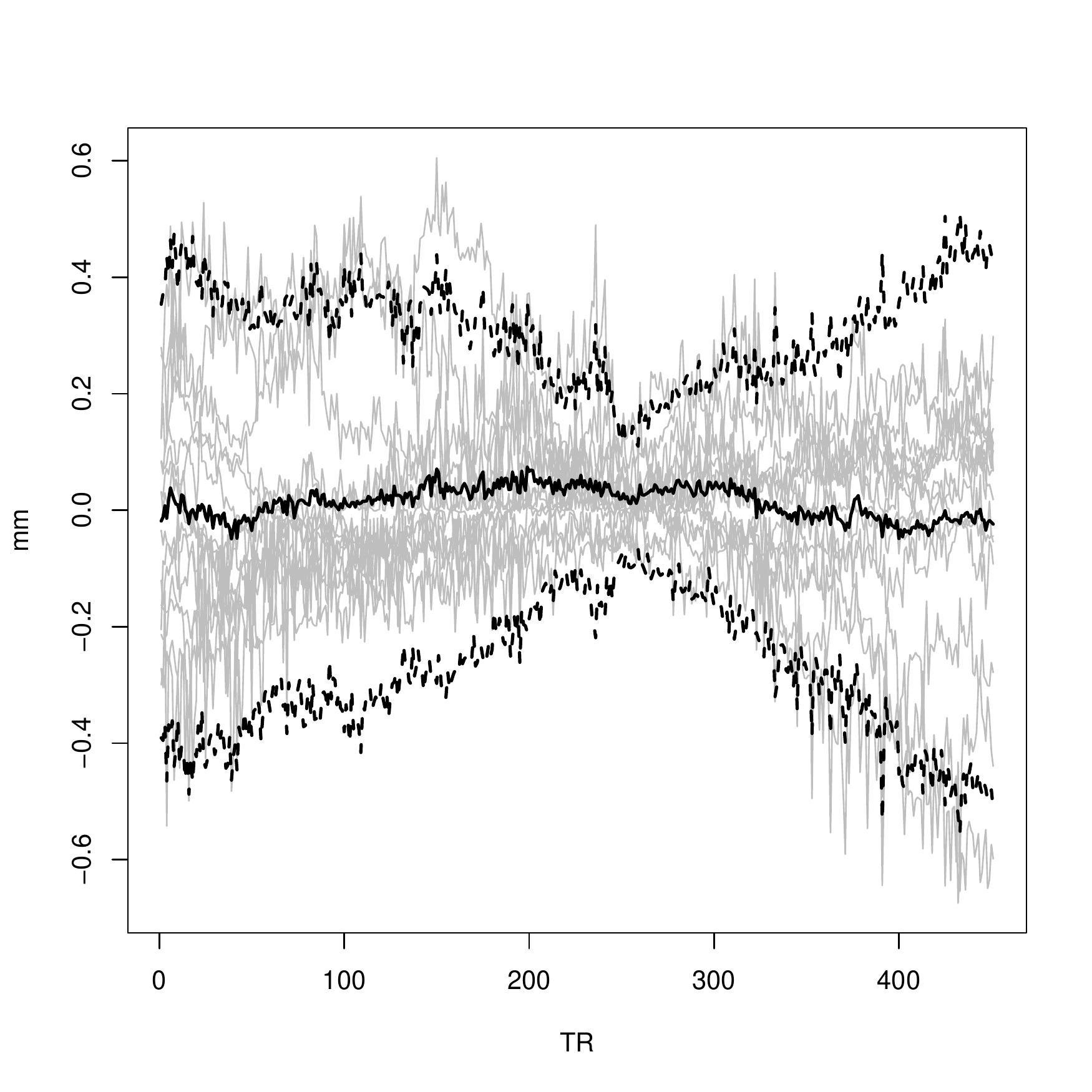}
    \caption{\texttt{data-forrestGump}: Summary statistics for the parameters estimates of head motion correction across TRs and participants. The left panel shows the magnitude of the three rotational parameters (in radians) for each individual on each of the 451 TRs and right panel shows the magnitude of the three translation parameters (in millimeters) for each individual on each of the 451 TRs. In each plot, solid black line indicates the mean over the individuals through TRs and black dotted lines indicate the mean$\pm$2sd over the individuals through TRs. 
    }
    \label{fig:appendix-mocoParams-task}
\end{figure*}




\vfill

\end{document}